\title{Complexity of Nonassociative Lambek Calculus with classical logic}
\author{Pawe\l{} P\l{}aczek
\institute{WSB Merito University in Poznan, Poland}
\email{pawel.placzek@poznan.merito.pl}
}
\renewcommand{\implies}[2]{\ifthenelse{\isempty{#1}}{}{\text{if~}#1\text{,}}\ifthenelse{\isempty{#2}}{}{\text{~then~}#2}}
\renewcommand{\land}{\text{~and~}}
\renewcommand{\lor}{\text{~or~}}
\newcommand\powerset[1]{\mathscr{P}(#1)}
\let\t\forall
\renewcommand\forall{\operatorname{\t}\limits\limits}
\let\tt\exists
\renewcommand\exists{\operatorname{\tt}\limits\limits}
\newcommand{\incircbin}[1]{#1'}
\newcommand{\owedge}{\incircbin{\wedge}}
\newcommand{\ovee}{\incircbin{\vee}}
\newcommand{\ocirc}{\incircbin{\otimes}}
\newcommand{\oneg}{\incircbin{\neg}}
\newcommand{\otop}{\incircbin{\top}}
\newcommand{\obot}{\incircbin{\bot}}
\newcommand{\omultimap}{\incircbin{\multimap}}
\newcommand{\omultimapinv}{\incircbin{\multimapinv}}
\newcommand{\oleq}{\incircbin{\leq}}
\newcommand{\onleq}{\incircbin{\not\leq}}
\newcommand{\oi}{\incircbin{1}}
\theoremstyle{definition}
\newtheorem{definition}{Definition}[section]
\theoremstyle{plain}
\newtheorem{theorem}[definition]{Theorem}
\newtheorem{lemma}[definition]{Lemma}
\newtheorem{corollary}[definition]{Corollary}
\newtheorem{proposition}[definition]{Proposition}
\theoremstyle{remark}
\newcolumntype{C}{>{\hfil\displaystyle}c<{\hfil}}
\newcolumntype{L}{>{\displaystyle}l<{\hfil}}
\newenvironment{marray}{\[\array{lcc}}{\endarray\]\ignorespacesafterend}
\newenvironment{rules}{\[\array{c@{}L@{}c@{}L}\hspace*{0.1\textwidth} & \hspace*{0.4\textwidth} & \hspace*{0.1\textwidth} & \hspace*{0.4\textwidth}\\[-3em]}{\endarray\]\ignorespacesafterend}
\newcommand\EXPTIME{\textit{EXPTIME}\xspace}
\newcommand\PTIME{\textit{PTIME}\xspace}
\renewcommand{\multimap}{\backslash}
\renewcommand{\multimapinv}{/}
\begin{document}
\maketitle

\begin{abstract}
The Nonassociative Lambek Calculus (NL) represents a logic devoid of the structural rules of exchange, weakening, and contraction, and it does not presume the associativity of its connectives. Its finitary consequence relation is decidable in polynomial time. However, the addition of classical connectives conjunction and disjunction (FNL) makes the consequence relation undecidable. Interestingly, if these connectives are distributive, the consequence relation is decidable in exponential time. This paper provides the proof, that we can merge classical logic and NL (i.e. BFNL), and still the consequence relation is decidable in exponential time.
\end{abstract}

\renewcommand{\arraystretch}{2}

\section{Introduction and preliminaries}

Lambek Calculus L was introduced by Lambek \cite{lambekL} under the name \textit{Syntactic Calculus}. L is a propositional logic with three connectives \(\otimes\) (product), \(\multimap\) and \(\multimapinv\) (residuations of product). Lambek \cite{lambekCUT} introduced the nonassociative version of this logic, nowadays called Nonassociative Lambek Calculus (NL). From a logical perspective, NL can be seen as the pure logic of residuation, and L as its stronger version for associative product. For both L and NL, J. Lambek provided a sequent system and proved cut elimination \cite{lambekL,lambekCUT}.

The product for both L and NL derives from conjunction after dropping the structural rules of exchange, weakening, and contraction in terms of sequent systems. NL additionally does not require being an associative operator in terms of algebra. In effect, we obtain a pure operation joining two formulas. This operation may be seen as a binary modality.

\begin{definition}
Let \(\mathbf{G} = (G, \otimes, \multimap, \multimapinv, \leq)\) be a structure such that \((G, \otimes)\) is a groupoid, \((G, \leq)\) is a poset, and the following holds:
\begin{marray}
\text{(RES)} & a \otimes b \leq c \text{ iff } b \leq a \multimap c \text{ iff } a \leq c \multimapinv b
\end{marray}
for all \(a, b, c \in G\). Then \(\mathbf{G}\) is called a \emph{residuated groupoid}.
\end{definition}

By \emph{groupoid} we mean a set closed under a binary operation without any specific properties required. The residuated groupoids are models of NL. The residuated groupoids where the product is associative are called \emph{residuated semigroups} and are models of L.

The most popular extensions of L and NL are: adding a constant 1 or adding conjunction and disjunction. The constant 1 in algebras is a unit for the product. The conjunction and disjunction replace the partial order with tshe lattice structure and lattice order. We can also add the boundaries, i.e., \(\top\) and \(\bot\), as respectively, the greatest and lowest elements. In this paper we use the same symbol for both syntactic and semantic purposes and the exact meaning is clear from the context.

\begin{definition}
Let \((G, \otimes, \multimap, \multimapinv, \leq)\) be a residuated groupoid and let \(1 \in G\) be an element such that:
\[
1 \otimes a = a = a \otimes 1
\]
for all \(a \in G\). Then \((G, \otimes, \multimap, \multimapinv, 1, \leq)\) is a unital residuated groupoid.
\end{definition}

The unital residuated groupoids are models for NL with constant 1 and unital residuated semigroups are models for L with constant 1.

Lambek Calculus with additive connectives (conjunction and disjunction) is called Full Lambek Calculus and denoted FL. Some authors also require the presence of 1 (multiplicative constant) and \(\top, \bot\) (additive constants). In this paper, we follow this convention, so FL admits all these constants. Analogously, FNL is an extension of NL with additive connectives and all constants.

\begin{definition}
Let \((G, \otimes, \multimap, \multimapinv, 1, \leq)\) be a unital residuated groupoid and \((G, \vee, \wedge, \top, \bot, \leq)\) be a bounded lattice. Then, \((G, \otimes, \multimap, \multimapinv, \vee, \wedge, 1, \top, \bot, \leq)\) is a \emph{residuated lattice}.
\end{definition}The residuated lattices are models for FNL. Residuated lattices where \(\otimes\) is associative are models for FL.

Pentus \cite{pentus} proves that pure L is NP-complete and Buszkowski \cite{buszkoL} proves that its finitary consequence relation is undecidable. A similar situation applies if we add the constant 1. FL is a strongly conservative extension\footnote{A logic \(\mathcal{L}_2\), extending \(\mathcal{L}_1\), is a (resp. strongly) conservative extension of \(\mathcal{L}_1\), if both logics have the same theorems (resp. the same consequence relation) in language of \(\mathcal{L}_1\)} of L, so its finitary consequence relation is also undecidable. The same applies to all strongly conservative extensions of L. In this paper, we focus on extensions of NL because of that.

Buszkowski \cite{buszkoL} proves that the finitary consequence relation for NL is in \PTIME. The same applies if we admit the multiplicative constant. Unfortunately, FNL has an undecidable consequence relation \cite{chvalovsky}.

The lattices in the algebras of FNL are not necessarily distributive. If we consider logic with such an axiom for additive connectives, we talk about Distributive Full Nonassociative Lambek Calculus and denote it DFNL. The models for this logic are residuated distributive lattices.

The finitary consequence relation of DFNL is \EXPTIME-complete if we do not admit the multiplicative constant 1 and is in \EXPTIME if we admit the constant, which was proved in \cite{shkatov}.\footnote{Shkatov and Van Alten \cite{shkatov} show that the satisfiability problem of quantifier-free first-order formulas in the language of bounded distributive residuated lattices is \EXPTIME-complete.} The lower bound of complexity of the consequence relation for DFNL with constant 1 remains an open problem.

The other interesting extensions of FNL are BFNL and HFNL, i.e., Boolean FNL and Heyting FNL. These logics may be seen as extensions of NL with Boolean and Heyting algebras or as extensions of classical logic and intuitionistic logic with NL. Such logics have been studied by Galatos and Jipsen \cite{galatos2017}, Buszkowski \cite{buszko2021}, and others.

\begin{definition}
Let \((G, \otimes, \multimap, \multimapinv, 1, \leq)\) be a unital residuated groupoid and \((G, \vee, \wedge, \neg, \bot, \top, \leq)\) be a Boolean algebra. Then, \((G, \otimes, \multimap, \multimapinv, \vee, \wedge, \neg, 1, \top, \bot, \leq)\) is a \emph{residuated Boolean algebra}.
\end{definition}

In this paper, we provide the proof of the upper bound of the complexity of the consequence relation for BFNL, extending the results of \cite{shkatov}, using the same methods. We also use the results from \cite{vanalten}, where distributive lattices, Heyting algebras, and Boolean algebras are considered. The differences between \cite{shkatov,vanalten} and this paper lay in the details. An experienced reader can easily deduce the results of this paper by reading cited papers, but some changes are subtle, e.g. in some places we do not use families of upsets, but the whole powerset, because we have negation here. Moreover, the results in \cite{shkatov,vanalten} are described in only algebraic terms and use first-order formulas. Here, we use syntactic notion more directly, still using algebraic methods in proofs.

We show the full proof only for the version with the constant 1 because the proofs for logics without that constant can be easily obtained by omitting some parts.

The proof for HFNL may be done analogously. It is necessary to adjust some definitions and conditions, but the idea remains the same.

Since HFNL and BFNL without 1 are strongly conservative extensions of DFNL,\footnote{See Remark 5 in \cite{buszko2021}.} we know their finitary consequence relations are \EXPTIME-hard and, in effect, are \EXPTIME-complete. The lower bound for HFNL and BFNL with 1 is still an open problem.

In the second section, we provide the sequent system for BFNL. This system comes from \cite{galatos2017}, where the authors prove the cut-elimination theorem. In the third section, we study partial structures connected with models of BFNL. We prove important theorems that allow us to check whether a given partial structure is a partial residuated algebra. In the last section, we use these theorems to prove \EXPTIME complexity of the consequence relation for BFNL.

\section{Sequent system}

The language of BFNL is defined as follows. We admit a countable set of variables, which we denote by small Latin letters. The formulas are constructed from this set of variables by five binary connectives (\(\otimes, \multimap, \multimapinv, \vee, \wedge\)), one unary connective (\(\neg\)) and three constants (\(1, \top, \bot\)).

Usual notion of sequents using sequents of formulas is not applicable in nonassociative framework. The comma in sequences is a concatenation operation which is associative. We need to change the structure to something more flexible. Moreover, we need to have two types of commas: one for \(\otimes\) and one for \(\wedge\) with different properites.

We define bunches. The bunches are elements of free biunital bigroupoid, i.e. the algebra with two binary operations with a unit for both of them, generated from the set of all formulas. We denote first operator by comma and the second one by semicolon. The unit for comma is denoted \(\epsilon\) and unit for semicolon is \(\delta\).

One may think of bunches as of binary trees in which leaves are formulas or \(\epsilon\) or \(\delta\) and every node besides leaves is labeled by comma or semicolon.

The bunch \(\epsilon\) is called an \emph{empty bunch}. All the other bunches are nonempty. We reserve Latin capital letters for formulas and Greek capital letters for bunches. A \emph{context} is a bunch with an anonymous variable. Contexts are denoted by \(\Gamma[\_]\), and when we perform the substitution of \(\Delta\) in place of \(\_\), we represent it as \(\Gamma[\Delta]\).

A \emph{sequent} is a pair \(\Gamma\), \(A\), where \(\Gamma\) is a bunch and \(A\) is a formula. We write \(\Gamma \Rightarrow A\).

The axioms and the rules for BFNL are as follows:
\begin{rules}
\text{(id)} & A \Rightarrow A & \text{(cut)} & \frac{\Gamma \Rightarrow A \quad \Delta[A] \Rightarrow C}{\Delta[\Gamma] \Rightarrow C}\\
(\otimes\Rightarrow) & \frac{\Gamma[(A, B)] \Rightarrow C}{\Gamma[A \otimes B] \Rightarrow C} & (\Rightarrow\otimes) & \frac{\Gamma \Rightarrow A \quad \Delta \Rightarrow B}{\Gamma, \Delta \Rightarrow A \otimes B}\\
(\multimap\Rightarrow)& \frac{\Gamma[B] \Rightarrow C \quad \Theta \Rightarrow A}{\Gamma[(\Theta, A \multimap B)] \Rightarrow C} & (\Rightarrow\multimap) & \frac{A, \Gamma \Rightarrow B}{\Gamma \Rightarrow A \multimap B}\\
(\multimapinv\Rightarrow) & \frac{\Gamma[A] \Rightarrow C \quad \Theta \Rightarrow B}{\Gamma[(A \multimapinv B, \Theta)] \Rightarrow C} & (\Rightarrow\multimapinv) & \frac{\Gamma, B \Rightarrow A}{\Gamma \Rightarrow A \multimapinv B}\\
\end{rules}
\begin{rules}
(\wedge\Rightarrow) & \frac{\Gamma[(A; B)] \Rightarrow C}{\Gamma[A \wedge B] \Rightarrow C} & (\Rightarrow\wedge) & \frac{\Gamma \Rightarrow A \quad \Gamma \Rightarrow B}{\Gamma \Rightarrow A \wedge B}\\
(\vee\Rightarrow) & \frac{\Gamma[A] \Rightarrow C \quad \Gamma[B] \Rightarrow C}{\Gamma[A \vee B] \Rightarrow C} & (\Rightarrow\vee) & \frac{\Gamma \Rightarrow A}{\Gamma \Rightarrow A \vee B}\quad \frac{\Gamma \Rightarrow B}{\Gamma \Rightarrow A \vee B}\\
(\top\Rightarrow) & \frac{\Gamma[\Delta] \Rightarrow C}{\Gamma[(\top ; \Delta)] \Rightarrow C} \, \frac{\Gamma[\Delta] \Rightarrow C}{\Gamma[(\Delta ; \top)] \Rightarrow C} & (\Rightarrow\top) & \Gamma \Rightarrow \top\\
(\bot\Rightarrow) & \Gamma[\bot] \Rightarrow C\\
\end{rules}
\begin{rules}
\text{(\(\wedge\)-ass)} & \frac{\Gamma[\Delta_1 ; (\Delta_2 ; \Delta_3)] \Rightarrow C}{\overline{\Gamma[(\Delta_1 ; \Delta_2) ; \Delta_3] \Rightarrow C}} & \text{(\(\wedge\)-ex)} & \frac{\Gamma[\Delta ; \Theta] \Rightarrow C}{\Gamma[ \Theta ; \Delta ] \Rightarrow C}\\
\text{(\(\wedge\)-weak)} & \frac{\Gamma[\Delta] \Rightarrow C}{\Gamma[\Delta ; \Theta] \Rightarrow C} & \text{(\(\wedge\)-cont)} & \frac{\Gamma[\Delta ; \Delta] \Rightarrow C}{\Gamma[\Delta] \Rightarrow C}\\
\end{rules}
\begin{rules}
(\neg\Rightarrow) & A \wedge \neg A \Rightarrow \bot & (\Rightarrow\neg) & \top \Rightarrow A \vee \neg A\\
(1\Rightarrow) & \frac{\Gamma[\Delta] \Rightarrow C}{\Gamma[(1, \Delta)] \Rightarrow C} \quad \frac{\Gamma[\Delta] \Rightarrow C}{\Gamma[(\Delta, 1)] \Rightarrow C} & (\Rightarrow 1) & \epsilon \Rightarrow 1
\end{rules}

We shortly describe the semantics of BFNL. The models for BNFL are residuated Boolean algebras. The valuation is a homomorphism \(\mu\) from the free algebra of formulas to a residuated Boolean algebra \(\mathbf{B}\) extended to bunches inductively as follows:
\begin{gather*}
\mu(\epsilon) = 1\\
\mu(\delta) = \top\\
\mu((\Gamma, \Delta)) = \mu(\Gamma) \otimes \mu(\Delta)\\
\mu((\Gamma; \Delta)) = \mu(\Gamma) \wedge \mu(\Delta)
\end{gather*}

The sequent \(\Gamma \Rightarrow A\) is said to be true in \(\mathbf{B}\) under the valuation \(\mu\) if \(\mu(\Gamma) \leq \mu(A)\).

\section{Partial residuated Boolean algebras}

In this section we provide the notion of partial structures and we prove some properties. The most important result here is \Cref{partial-B1} which helps in identifying partial residuated Boolean algebras in exponential time in the next section.

\subsection{Partial structures}

\begin{definition}
A function \(f: U \mapsto Y\), where \(U \subseteq X\), is called a \emph{partial function} from \(X\) to \(Y\) (we write \(f: X \to Y\)). If \(U = X\), then the function is said to be \emph{total}.
\end{definition}

We write \(f(x) = \infty\), if the function \(f\) on the argument \(x\) is undefined.

\begin{definition}
Let \(I, J, K\) be finite indexing sets. We say \((U, \{f^{n_i}_i\}_{i \in I}, \{a_j\}_{j \in J}, \{R^{m_k}_k\}_{k \in K})\) is a \emph{partial structure}, if \(\{a_j\}_{j \in J} \subseteq U\) and \(f^{n_i}_i: U^{n_i} \to U\) is a partial function for all \(i \in I\) and \(R^{m_k}_k \subseteq U^{m_k}\) for all \(k \in K\). If all operations are total, then we say the structure is \emph{total}.
\end{definition}

\begin{definition}
Let \(I, J, K\) be finite indexing sets. Let \((U, \{f^{n_i}_i\}_{i \in I}, \{a_j\}_{j \in J}, \{R^{m_k}_k\}_{k \in K})\) be a partial structure and \((U', \{f'^{n_i}_i\}_{i \in I}, \{a'_j\}_{j \in J}, \{R'^{m_k}_k\}_{k \in K})\) be a total structure. Let \(\iota: U \to U'\) be an injection. We say \(\iota\) is an \emph{embedding}, if:
\begin{enumerate}[(i)]
\item for all \(j \in J\) we have \(\iota(a_j) = a'_j\),
\item for all \(i \in I\) and all \(x_1, x_2, \dots, x_{n_i} \in U\), if \(f^{n_i}_i (x_1, x_2, \dots, x_{n_i}) \neq \infty\),\\ then \(\iota(f^{n_i}_i (x_1, x_2, \dots, x_{n_i}) = f'^{n_i}_i (\iota(x_1), \iota(x_2), \dots, \iota(x_{n_i}))\),
\item for all \(k \in K\) we have \((\iota(x_1), \iota(x_2), \dots, \iota(x_{m_k})) \in (R'^{m_k}_k) \iff (x_1, x_2, \dots, x_{m_k}) \in R^{m_k}_k\)\\ for all \(x_1, x_2, \dots, x_{m_k} \in U\).
\end{enumerate}
\end{definition}
If \(\mathbf{A}\) is a partial structure, \(\mathbf{B}\) is a total structure and there exists an embedding from \(\mathbf{A}\) to \(\mathbf{B}\), then we say \(\mathbf{A}\) is \emph{embeddable} into \(\mathbf{B}\). If \(\mathbf{A}\) is embeddable into \(\mathbf{B}\) and \(A \subseteq B\), then we say \(\mathbf{A}\) is a \emph{partial substructure} of \(\mathbf{B}\). Let \(\mathcal{K}\) be a class of structures. By \(\mathcal{K}^P\) we denote the class of all partial substructures of structures of \(\mathcal{K}\).

\begin{definition}
Let \(\mathbf{L} = (L, \vee, \wedge, \top, \bot, \leq)\) be a partial structure. We say \(\mathbf{L}\) is a \emph{partial lattice}, if there exists a total lattice \(\mathbf{L'}\) such that \(\mathbf{L}\) is embeddable into it. If \(\mathbf{L'}\) is distributive, then \(\mathbf{L}\) is a \emph{partial distributive lattice}.
\end{definition}

One shows that a partial structure \((L, \vee, \wedge, \top, \bot, \leq)\) is a partial bounded lattice, if \((L, \leq)\) is a poset, \(\top\) and \(\bot\) are bounds of \(\leq\) and \(\vee, \wedge\) are compatible with \(\leq\), i.e. if \(a \vee b \neq \infty\), then \(a \vee b\) is the supremum of \(\{a, b\}\) with respect to \(\leq\) and if \(a \wedge b \neq \infty\), then \(a \wedge b\) is the infimum of \(\{a, b\}\) with respect to \(\leq\). See \cite{shkatov}.

\begin{definition}
Let \(\mathbf{B} = (B, \otimes, \multimap, \multimapinv, \vee, \wedge, \neg, 1, \top, \bot, \leq)\) be a partial structure. We say \(\mathbf{B}\) is a \emph{partial residuated Boolean algebra}, if there exists a total residuated Boolean algebra such that \(\mathbf{B}\) is embeddable into it and for all \(a \in B\) we have \(\neg a \neq \infty\), \(\neg a \in B\), \(a \vee \neg a = \top\) and \(a \wedge \neg a = \bot\). One notices that \((B, \otimes, \multimap, \multimapinv, \vee, \wedge, \top, \bot, \leq)\) is a partial bounded distributive residuated lattice.
\end{definition}

\subsection{Filters}

Let \((P, \leq)\) be a poset and let \(A \subseteq P\). We say \(A\) is an \emph{upset}, if for all \(a \in A\) and all \(b \in P\) such that \(a \leq b\) we have \(b \in A\). Analogously, \(A\) is a \emph{downset}, if for all \(a \in A\) and \(b \in P\) such that \(b \leq a\) we have \(b \in A\).

For every poset \((P, \leq)\) and every element \(a \in P\) we define:
\[
[a) = \{b \in P: a \leq b\} \qquad (a] = \{b \in P: b \leq a\}
\]
One notices \([a)\) is an upset and \((a]\) is a downset.

\begin{definition}
Let \((L, \vee, \wedge)\) be a lattice and let \(F \subseteq L\). We say \(F\) is a \emph{filter}, if the following conditions hold:
\begin{marray}
\text{(F1)} & \implies{a \leq b \land a \in F}{b \in F}\\
\text{(F2)} & \implies{a \in F \land b \in F}{a \wedge b \in F}
\end{marray}
We say \(F\) is \emph{proper}, if \(F \neq L\). The filter \(F\) is \emph{prime}, if it is proper and:
\begin{marray}
\text{(F3)} & \implies{a \vee b \in F}{a \in F \lor b \in F}
\end{marray}
\end{definition}

Let \((L, \vee, \wedge)\) be a lattice and \(F\) be a filter. We use the following notion:
\[ F_a = \left\{y \in L: \exists_{x \in F} x \wedge a \leq y\right\}\]
One proves \(F_a\) is a filter.

If we consider filters on residuated Boolean algebras, then (F3) is replaced with the following condition:
\begin{marray}
\text{(FB)} & \neg a \in F \text{ iff } a \notin F
\end{marray}

Considering filters on partial residuated Boolean algebras, we must change definition. We replace (F2) with the following condition:
\begin{marray}
\text{(F2')} & \implies{a \in F \land b \in F}{a \wedge b \in F \lor a \wedge b = \infty}
\end{marray}
for all \(a, b \in B\).

The following properties of filters are useful and may be easily proved.

\begin{lemma}
\label{bool-filter}
Let \((B, \vee, \wedge, \neg, \top, \bot)\) be a Boolean algebra and let \(F \subseteq B\) be a proper filter. The filter \(F\) is prime if, and only if, \(a \in F\) or \(\neg a \in F\) for all \(a \in B\).
\end{lemma}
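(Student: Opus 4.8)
The statement is a standard fact about prime filters in Boolean algebras, and the plan is to prove it by two implications, using the defining properties (F1), (F2), (F3) together with the Boolean structure.

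For the forward direction, suppose $F$ is prime and let $a \in B$ be arbitrary. Since $B$ is a Boolean algebra we have $a \vee \neg a = \top$, and $\top \in F$ because $F$ is a filter (apply (F1) to any element of $F$, noting $F$ is nonempty as a proper filter must contain $\top$; alternatively $\top$ is always in any filter). Then (F3) applied to $a \vee \neg a \in F$ gives $a \in F$ or $\neg a \in F$, as desired.

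For the converse, assume $a \in F$ or $\neg a \in F$ for all $a \in B$, and suppose $a \vee b \in F$. We want $a \in F$ or $b \in F$. Suppose toward a contradiction that $a \notin F$ and $b \notin F$. By hypothesis, $\neg a \in F$ and $\neg b \in F$. By (F2), $\neg a \wedge \neg b \in F$, and by De Morgan in the Boolean algebra, $\neg a \wedge \neg b = \neg(a \vee b)$. So both $a \vee b \in F$ and $\neg(a \vee b) \in F$; by (F2) again, $(a \vee b) \wedge \neg(a \vee b) = \bot \in F$, and then (F1) forces $F = B$, contradicting properness. Hence $F$ is prime.

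The only mildly delicate point is bookkeeping about which ambient facts are used: that $\top$ lies in every filter, that a proper filter cannot contain $\bot$, and the De Morgan identity $\neg a \wedge \neg b = \neg(a\vee b)$ in a Boolean algebra — all of which are routine. There is no real obstacle here; the lemma is essentially a restatement of the complementation law, and the proof is a short computation once (F1)--(F3) are in hand.
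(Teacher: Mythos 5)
Your proof is correct and follows essentially the same argument as the paper: the forward direction uses $a \vee \neg a = \top \in F$ together with (F3), and the converse derives $\bot \in F$ from $\neg a \wedge \neg b = \neg(a \vee b)$ via (F2), contradicting properness. No gaps.
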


This lemma remains true for residuated Boolean algebras.

\begin{proof}
Let \(F\) be a prime filter. Then \(a \vee \neg a = \top \in F\) for all \(a \in B\), so the condition of lemma holds. Now let \(a \in F\) or \(\neg a \in F\) for all \(a \in B\). Let \(a \vee b \in F\) and suppose \(a \notin F\) and \(b \notin F\). Then \(\neg a \in F\) and \(\neg b \in F\), by assumption. By (F2), \(\neg a \wedge \neg b \in F\). So, \(\neg(a \vee b) \in F\). Hence, \((a \vee b) \wedge \neg(a \vee b) = \bot \in F\), by (F2). This is impossible.
\end{proof}

\begin{lemma}
\label{sep-filter-strong}
Let \((L, \vee, \wedge)\) be a distributive lattice and let \(F \subseteq L\) be a filter and \(b \in L\) be such that \(b \notin F\). There exists a prime filter \(P \subseteq L\) such that \(F \subseteq P\) and \(b \notin P\).
\end{lemma}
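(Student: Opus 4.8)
The plan is to prove this by a standard Zorn's-lemma maximality argument, the classical prime-filter separation theorem for distributive lattices (the Birkhoff–Stone style argument). First I would consider the family
\[
\mathcal{F} = \{G : G \text{ is a filter of } L,\ F \subseteq G,\ b \notin G\}.
\]
This family is nonempty since \(F \in \mathcal{F}\), and it is closed under unions of chains: given a chain in \(\mathcal{F}\), its union is again a filter (conditions (F1) and (F2) are preserved because any two elements lie in a common member of the chain), contains \(F\), and omits \(b\). By Zorn's Lemma \(\mathcal{F}\) has a maximal element \(P\). Since \(b \notin P\), \(P\) is proper, so it remains only to verify that \(P\) satisfies (F3).

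For (F3), suppose toward a contradiction that \(a_1 \vee a_2 \in P\) but \(a_1 \notin P\) and \(a_2 \notin P\). Consider the filters \(P_{a_1}\) and \(P_{a_2}\) generated by adjoining \(a_1\), respectively \(a_2\), to \(P\) — here I would use the notation \(P_a = \{y \in L : \exists_{x \in P}\ x \wedge a \leq y\}\) introduced just before the lemma, which the excerpt already records as being a filter, together with the evident fact that \(P \subseteq P_a\) and \(a \in P_a\). Each strictly contains \(P\), so by maximality neither lies in \(\mathcal{F}\); since both still contain \(F\), this forces \(b \in P_{a_1}\) and \(b \in P_{a_2}\). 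Unwinding the definitions, there are \(x_1, x_2 \in P\) with \(x_1 \wedge a_1 \leq b\) and \(x_2 \wedge a_2 \leq b\). Put \(x = x_1 \wedge x_2 \in P\) (using (F2)). Then \(x \wedge a_1 \leq b\) and \(x \wedge a_2 \leq b\), so \((x \wedge a_1) \vee (x \wedge a_2) \leq b\); by distributivity the left side equals \(x \wedge (a_1 \vee a_2)\), which lies in \(P\) since \(x \in P\) and \(a_1 \vee a_2 \in P\). Hence \(b \in P\) by (F1), contradicting \(b \notin P\).

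The only genuinely essential hypothesis, and the one place the argument could go wrong, is distributivity: without it, the identity \((x \wedge a_1) \vee (x \wedge a_2) = x \wedge (a_1 \vee a_2)\) fails, and the filter generated by \(P\) together with \(a_1 \vee a_2\) need not contain the intersection of the two enlarged filters in the required way. So the distributive law is exactly what makes the "both extensions capture \(b\)" situation contradictory. Everything else — nonemptiness of \(\mathcal{F}\), the chain-union closure, and the bookkeeping with the \(P_a\) construction — is routine, and the facts that \(F_a\) is a filter and that filters are closed under finite meets are already supplied in the excerpt, so I would simply cite them rather than reprove them.
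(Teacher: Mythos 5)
Your proof is correct and follows essentially the same route as the paper: the same Zorn's-lemma maximality argument over the family of filters containing \(F\) and omitting \(b\), the same use of the \(P_a\) construction to force \(b\) into both enlargements, and the same distributivity step to derive the contradiction. The only cosmetic difference is that the paper meets \(x_1\) and \(x_2\) inside the inequality rather than naming \(x = x_1 \wedge x_2\) first; the content is identical.
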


\begin{proof}
Let \(F\) be a filter, \(b \in L\) and \(b \notin F\). We construct a prime filter as an extension of \(F\), but we need to avoid adding \(b\).

Let \(\mathcal{E}\) be a family of filters of \(L\) containing \(F\) and not containing \(b\). The family is nonempty, since \(F \in \mathcal{E}\). Let \(C \subseteq \mathcal{E}\) be any nonempty chain in \(\mathcal{E}\). Then \(F \subseteq \bigcup C\) and \(b \notin \bigcup C\). We show \(\bigcup C\) is a filter. Let \(c, d \in \bigcup C\), then \(c \in G\) and \(d \in G'\) for some \(G, G' \in C\). Since \(C\) is a chain, then \(G \subseteq G'\) or \(G' \subseteq G\), so both \(c\) and \(d\) are elements of \(G\) or \(G'\). Then, by (F2), \(c \wedge d \in G\) or \(c \wedge d \in G'\), so \(c \wedge d \in \bigcup C\). So \(\bigcup C\) satisfies (F2). (F1) is obvious. Hence, \(\bigcup C\) is a filter.

By Kuratowski--Zorn's lemma, there exists \(P \in \mathcal{E}\), which is a maximal element of \(\mathcal{E}\). We need to show \(P\) is prime. Let \(c, d \notin P\) and \(c \vee d \in P\). Since \(c \notin P\), then \(P \subseteq P_c\), and, since \(P\) is a maximal element of \(\mathcal{E}\), \(P_c \notin \mathcal{E}\). Clearly, \(F \subseteq P_c\), so \(b \in P_c\). Analogously, since \(d \notin P\), then \(b \in P_d\).

By definition of \(P_c, P_d\), for some \(x, y \in P\) we have \(x \wedge c \leq b\) and \(y \wedge d \leq b\). Hence, \(x \wedge y \wedge c \leq b\) and \(x \wedge y \wedge d \leq b\) and so \((x \wedge y \wedge c) \vee (x \wedge y \wedge d) \leq b\). By distributivity, \(x \wedge y \wedge (c \vee d) \leq b\). Since \(x, y, c \vee d \in P\), then \(b \in P\). Thus, if \(c, d \notin P\), when \(c \vee d \in P\), then \(b \in P\), which is impossible by definition of \(P\).
\end{proof}

\begin{corollary}
\label{sep-filter}
Let \((L, \vee, \wedge)\) be a distributive lattice and let \(a, b \in L\) be such that \(a \not\leq b\). There exists a prime filter \(F \subseteq L\) such that \(a \in F\) and \(b \notin F\).
\end{corollary}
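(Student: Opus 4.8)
The plan is to derive this as a direct consequence of \Cref{sep-filter-strong}. The hypothesis $a \not\leq b$ says that $b$ fails to lie above $a$, so the principal filter $[a)$ is a filter not containing $b$. Concretely, I would first observe that $[a) = \{x \in L : a \leq x\}$ is a filter: condition (F1) is immediate from transitivity of $\leq$, and (F2) holds because if $a \leq x$ and $a \leq y$ then $a$ is a lower bound of $\{x,y\}$, hence $a \leq x \wedge y$ by the definition of infimum. (This is exactly the remark made just after the definition of $[a)$ in the text.)

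Next I would check that $b \notin [a)$: this is precisely the assumption $a \not\leq b$. So I am in the situation of \Cref{sep-filter-strong} with the filter $F = [a)$ and the element $b$. Applying that lemma yields a prime filter $P \subseteq L$ with $[a) \subseteq P$ and $b \notin P$. Finally, since $a \in [a)$ (because $a \leq a$) and $[a) \subseteq P$, we get $a \in P$, so $F := P$ is the desired prime filter.

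The only thing to be careful about is that \Cref{sep-filter-strong} requires $L$ to be a distributive lattice, which is part of the hypothesis here, so there is no gap; the prime filter it produces automatically satisfies (F1), (F2), (F3) and properness. There is no real obstacle — the corollary is essentially just the specialization of \Cref{sep-filter-strong} to principal filters, and the entire argument is a two-line unwinding of definitions once the lemma is in hand.
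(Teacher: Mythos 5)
Your proposal is correct and is exactly the paper's argument: apply \Cref{sep-filter-strong} to the principal filter \([a)\), noting that \(b \notin [a)\) is precisely the hypothesis \(a \not\leq b\) and that \(a \in [a) \subseteq P\). The extra verification that \([a)\) is a filter is a fine (if routine) addition; nothing further is needed.
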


\begin{proof}
The set \([a)\) is a filter such that \(b \notin [a)\). Then, by \Cref{sep-filter-strong}, there exists a prime filter \(P\) such that \(a \in P\) and \(b \notin P\).
\end{proof}

\begin{lemma}
\label{product1prime}
Let \(\mathbf{LB}\) be a total residuated Boolean algebra and let \(F, G\) be proper filters of \(\mathbf{B}\) and \(\mathbf{H}\) be a prime filter of \(\mathbf{H}\) such that \(\{ x \otimes y: x \in F \land y \in G \} \subseteq H\). Then, there exist prime filters \(F'\) and \(G'\) such that \(F \subseteq F'\) and \(G \subseteq G'\) and \(\{ x \otimes y: x \in F' \land y \in G \} \subseteq H\) and \(\{ x \otimes y: x \in F \land y \in G' \} \subseteq H\).
\end{lemma}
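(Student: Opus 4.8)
The statement is the "two-sided" extension lemma saying that if the product of two proper filters $F,G$ lands inside a prime filter $H$, then $F$ can be extended to a prime filter $F'$ keeping $F'\otimes G\subseteq H$, and symmetrically $G$ to $G'$ with $F\otimes G'\subseteq H$. By symmetry it suffices to construct $F'$; the construction of $G'$ is the mirror image. The natural strategy is a Zorn's lemma argument on the family
\[
\mathcal{E}=\{\,K : K\text{ is a filter},\ F\subseteq K,\ \{x\otimes y: x\in K,\ y\in G\}\subseteq H\,\}.
\]
This family is nonempty ($F\in\mathcal{E}$) and closed under unions of chains: for a chain $C\subseteq\mathcal{E}$, the union $\bigcup C$ is a filter by the same chain argument used in \Cref{sep-filter-strong} (using (F1) trivially and (F2) via the chain property), and if $x\in\bigcup C$ and $y\in G$ then $x\in K$ for some $K\in C$, so $x\otimes y\in H$; hence $\bigcup C\in\mathcal{E}$. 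So $\mathcal{E}$ has a maximal element $F'$ by Kuratowski--Zorn.

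The core of the proof is showing that this maximal $F'$ is prime. Since $\mathbf{LB}$ is a residuated Boolean algebra, by \Cref{bool-filter} it is enough to show that for every $a\in B$ either $a\in F'$ or $\neg a\in F'$ (and that $F'$ is proper — properness follows because $\bot\notin F'$: if $\bot\in F'$ then $\bot\otimes y=\bot$ would have to lie in $H$ for $y\in G\neq\emptyset$, contradicting that $H$ is proper, assuming $\bot\otimes y=\bot$, which holds in a residuated lattice). Suppose for contradiction that $a\notin F'$ and $\neg a\notin F'$. Then neither $F'_a$ nor $F'_{\neg a}$ lies in $\mathcal{E}$: both clearly contain $F$ and are filters containing $F'$, so by maximality each must violate the product condition. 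Thus there exist $x_1\in F'$, $y_1\in G$ with $(x_1\wedge a)\otimes y_1\notin H$, and $x_2\in F'$, $y_2\in G$ with $(x_2\wedge\neg a)\otimes y_2\notin H$. Put $x=x_1\wedge x_2\in F'$ and $y=y_1\wedge y_2\in G$. By monotonicity of $\otimes$, from $(x\wedge a)\otimes y\leq (x_1\wedge a)\otimes y_1$ we cannot directly conclude, so instead I use that $H$ is a filter (upset): since $(x\wedge a)\otimes y\leq (x_1\wedge a)\otimes y_1$ and the latter is $\notin H$, also $(x\wedge a)\otimes y\notin H$; similarly $(x\wedge\neg a)\otimes y\notin H$. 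Now since $H$ is \emph{prime}, from $(x\wedge a)\otimes y\notin H$ and $(x\wedge\neg a)\otimes y\notin H$ and the fact that primeness is equivalent to (FB) / closure under meets of complements, we get $\neg\big((x\wedge a)\otimes y\big)\in H$ and $\neg\big((x\wedge\neg a)\otimes y\big)\in H$, hence their meet is in $H$, i.e. $\neg\big((x\wedge a)\otimes y\big)\wedge\neg\big((x\wedge\neg a)\otimes y\big)\in H$. Equivalently $\big((x\wedge a)\otimes y\big)\vee\big((x\wedge\neg a)\otimes y\big)\notin H$. But $\otimes$ distributes over $\vee$ in a residuated lattice, so this join equals $\big((x\wedge a)\vee(x\wedge\neg a)\big)\otimes y = \big(x\wedge(a\vee\neg a)\big)\otimes y = (x\wedge\top)\otimes y = x\otimes y$. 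Since $x\in F'$ and $y\in G$ and $F'\in\mathcal{E}$, we have $x\otimes y\in H$ — contradiction. Hence $F'$ is prime.

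The main obstacle I anticipate is getting the algebraic identity chain
\[
\big((x\wedge a)\otimes y\big)\vee\big((x\wedge\neg a)\otimes y\big)=\big((x\wedge a)\vee(x\wedge \neg a)\big)\otimes y=\big(x\wedge(a\vee\neg a)\big)\otimes y=x\otimes y
\]
to work cleanly; this needs left-distributivity of $\otimes$ over $\vee$ (a standard consequence of residuation), the distributive law in the underlying lattice (available since a Boolean algebra is distributive), and the Boolean identity $a\vee\neg a=\top$. The transition from "$(x\wedge a)\otimes y\notin H$ for both halves" to "the join $\notin H$" is exactly an application of \Cref{bool-filter} together with (F2) inside $H$, packaged as: in a Boolean algebra a prime filter is closed under the map $u\mapsto\neg u$ sending non-members to members, and closed under $\wedge$, so non-membership of $u$ and $v$ forces non-membership of $u\vee v$. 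With that in hand the rest is bookkeeping, and the claim about $G'$ follows by the symmetric argument with the roles of the two arguments of $\otimes$ interchanged (using right-distributivity of $\otimes$ over $\vee$ instead).
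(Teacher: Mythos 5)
Your proof is correct and follows essentially the same route as the paper's: a Kuratowski--Zorn argument on the family of filters whose product with \(G\) lies in \(H\), followed by showing the maximal element is prime by extracting witnesses from \(F'_a\), combining them with meets, distributing \(\otimes\) over \(\vee\), and invoking primeness of \(H\). The only (cosmetic) difference is that you verify primeness via the Boolean criterion of \Cref{bool-filter} (\(a\in F'\) or \(\neg a\in F'\)) using \(F'_a\) and \(F'_{\neg a}\), whereas the paper checks condition (F3) directly for an arbitrary \(a\vee b\in F'\) using \(F'_a\) and \(F'_b\); both instantiations of the same mechanism are valid.
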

\begin{proof}
Let \(F, G\) be proper filters and \(H\) be a prime filter such that \(\{ x \otimes y: x \in F \land y \in G \} \subseteq H\). We show there exists a prime filter \(F'\) such that \(F \subseteq F'\) and \(\{ x \otimes y: x \in F' \land y \in G \} \subseteq H\).

Let \(\mathcal{E}\) be the family of filters \(Q\) of \(\mathbf{B}\) such that \(\{ x \otimes y: x \in Q \land y \in G \} \subseteq H\). This family is nonempty, since \(F \in \mathcal{E}\). Clearly, all filters in \(\mathcal{E}\) are proper; otherwise \(\bot = \bot \otimes 1 \in H\), which is impossible. We show that \(\bigcup C \in \mathcal{E}\) for every nonempty chain \(C \subseteq \mathcal{E}\). Now, let \(a \in \bigcup C\). Then, for some \(Q \in C\) we have \(a \in Q\) and \(\{ x \otimes y: x \in Q \land y \in G \} \subseteq H\). Hence, for some \(y \in G\), we have \(a \otimes y \in H\). So, \(\bigcup C \in \mathcal{E}\).

By Kuratowski--Zorn's lemma, there exists \(P \in \mathcal{E}\), which is a maximal element of \(\mathcal{E}\). We show \(P\) is a prime filter. Let \(a \vee b \in P\) and suppose \(a, b \notin P\). We consider \(P_a, P_b\). Clearly, \(P \subset P_a\) and \(P \subset P_b\). So, since \(P\) is a maximal element, \(P_a, P_b \notin \mathcal{E}\). So \(\{ x \otimes y: x \in P_a \land y \in G \} \not\subseteq H\) and \(\{ x \otimes y: x \in P_b \land y \in G \} \not\subseteq H\).

So, for some \(x, y \in P\) and some \(z_1, z_2 \in G\) we have \((x \wedge a) \otimes z_1 \notin H\) and \((y \wedge b) \otimes z_2 \notin H\). Since \(x, y, a \vee b \in P\), then \(x \wedge y \wedge (a \vee b) \in P\). So we have \((x \wedge y \wedge (a \vee b)) \otimes (z_1 \wedge z_2) \in H\). But:
\begin{multline*}
(x \wedge y \wedge (a \vee b)) \otimes (z_1 \wedge z_2) = ((x \wedge y \wedge a) \vee (x \wedge y \wedge b)) \otimes (z_1 \wedge z_2) =\\
=  (x \wedge y \wedge a) \otimes (z_1 \wedge z_2) \vee (x \wedge y \wedge b) \otimes (z_1 \wedge z_2)
\end{multline*}
So, since \(H\) is a prime filter, \((x \wedge y \wedge a) \otimes (z_1 \wedge z_2) \in H\) or \((x \wedge y \wedge b) \otimes (z_1 \wedge z_2) \in H\). Because \(H\) is a filter, then \((x \wedge a) \otimes z_1 \in H\) or \((y \wedge b) \otimes z_2 \in H\). This contradicts the assumptions. Hence, \(a \in P\) or \(b \in P\).

We put \(F' = P\). We show that there exists \(G'\) such that \(G \subseteq G'\) and \(\{ x \otimes y: x \in F \land y \in G' \} \subseteq H\) analogously.
\end{proof}

\begin{corollary}
\label{product2prime}
Let \(\mathbf{B}\) be a total residuated Boolean algebra and let \(F, G\) be proper filters of \(\mathbf{L}\) and \(\mathbf{H}\) be a prime filter of \(\mathbf{H}\) such that \(\{ x \otimes y: x \in F \land y \in G \} \subseteq H\). Then, there exist prime filters \(F'\) and \(G'\) such that \(F \subseteq F'\) and \(G \subseteq G'\) and \(\mathscr{R}_\mathbf{L}(F',G',H)\).
\end{corollary}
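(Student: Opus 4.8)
The plan is to iterate \Cref{product1prime}. Recall that $\mathscr{R}_\mathbf{L}(F',G',H)$ holds precisely when $\{x \otimes y : x \in F' \land y \in G'\} \subseteq H$. Since \Cref{product1prime} only extends one of the two filters to a prime filter while keeping the \emph{other one fixed}, a single application does not give the simultaneous containment we need; so I would apply the lemma twice, feeding the output of the first application into the second.

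First I would apply \Cref{product1prime} to the triple $(F, G, H)$. This yields, in particular, a prime filter $F'$ with $F \subseteq F'$ and $\{x \otimes y : x \in F' \land y \in G\} \subseteq H$. (The lemma also produces a prime extension of $G$ relative to the unchanged $F$, but that one is discarded; only $F'$ is kept.) Next I would check that the hypotheses of \Cref{product1prime} are again satisfied for the triple $(F', G, H)$: $F'$ is prime, hence proper; $G$ is proper by assumption; $H$ is prime; and the required containment $\{x \otimes y : x \in F' \land y \in G\} \subseteq H$ has just been established. Applying \Cref{product1prime} to $(F', G, H)$ and keeping the prime extension of the \emph{second} coordinate, I obtain a prime filter $G'$ with $G \subseteq G'$ and $\{x \otimes y : x \in F' \land y \in G'\} \subseteq H$.

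It then remains only to assemble the conclusion: $F'$ and $G'$ are prime, $F \subseteq F'$ and $G \subseteq G'$, and $\{x \otimes y : x \in F' \land y \in G'\} \subseteq H$, i.e.\ $\mathscr{R}_\mathbf{L}(F',G',H)$. There is essentially no genuine obstacle here; the only point worth remarking is that the properness of $F'$, which is needed to re-invoke \Cref{product1prime} on the second step, is automatic because $F'$ is prime. Hence the corollary reduces cleanly to two successive invocations of \Cref{product1prime}.
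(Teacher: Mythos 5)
Your proposal is correct and matches the paper's own proof: two successive applications of \Cref{product1prime}, first extending $F$ to a prime $F'$ against $G$, then extending $G$ to a prime $G'$ against $F'$. The only cosmetic difference is that your opening ``recall'' that $\mathscr{R}_\mathbf{L}(F',G',H)$ is equivalent to the single containment $\{x \otimes y : x \in F' \land y \in G'\} \subseteq H$ is exactly what \Cref{r-eqv} establishes, and the paper cites that lemma explicitly at the final step.
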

\begin{proof}
First, we construct \(F'\) such that \(\{ x \otimes y: x \in F' \land y \in G \} \subseteq H\), by \Cref{product1prime}. Then, we construct \(G'\) such that \(\{ x \otimes y: x \in F' \land y \in G' \} \subseteq H\), by \Cref{product1prime}. Then, by \Cref{r-eqv}, \(\mathscr{R}_\mathbf{L}(F',G',H)\).
\end{proof}

\subsection{Residuated frames}

\begin{definition}
Let \(\mathfrak{F} = (P, I, R)\). We say \(\mathfrak{F}\) is a \emph{residuated frame}, when \(I \subset P\) and \(R\) is a ternary relation on \(P\) and the following conditions hold:
\begin{marray}
\text{(U1)} & \forall_{x, x', y, z \in P} \big( \implies{R(x, y, z) \land x' = x}{R(x',y,z)}\big)\\
\text{(U2)} & \forall_{x, y, y', z \in P} \big( \implies{R(x, y, z) \land y' = y}{R(x,y',z)}\big)\\
\text{(U3)} & \forall_{x, y, z, z' \in P} \big( \implies{R(x, y, z) \land z = z'}{R(x,y,z')}\big)\\
\text{(U4)} & \forall_{x \in P}\exists_{y, z \in I} \big( R(x,y,x) \land R(z,x,x) \big)\\
\text{(U5)} & \forall_{x, z \in P}\forall_{y \in I} \big( \implies{R(x,y,z) \lor R(y,x,z)}{x = z} \big)
\end{marray}
\end{definition}

Residuated frames are the relational structures similar to groupoids. Instead of a binary operation we use a ternary relation. 

\begin{definition}
Let \(\mathbf{B} = (B, \otimes, \multimap, \multimapinv, \vee, \wedge, \neg, 1, \top, \bot, \leq)\) be a partial residuated Boolean algebra. We define the \emph{associated residuated frame} \(\mathfrak{F}_\mathbf{B} = (\mathscr{F}(B), \mathscr{I}_\mathbf{B}, \mathscr{R}_\mathbf{B})\), where \(\mathscr{F}(B)\) is the set of prime filters of \(\mathbf{B}\), \(\mathscr{I}_\mathbf{B}\) is the set of all prime filters containing 1 and:
\begin{align*}
\mathscr{R}_\mathbf{B}(F,G,H) \iff & \left( \forall_{a,b \in B} \implies{a \in F \land b \in G}{ a \otimes b \in H \vee a \otimes b = \infty} \right)\\
& \land \left(\forall_{a, b \in B} \implies{a \in F \land a \multimap b \in G \land a \multimap b \neq \infty }{ b \in H} \right)\\
& \land \left(\forall_{a, b \in B} \implies{b \multimapinv a \in F \land a \in G \land a \multimapinv b \neq \infty}{ b \in H} \right).
\end{align*}
\end{definition}

\begin{proposition}
\label{unital-filters}
Let \(\mathbf{B}\) be a residuated Boolean algebra and let \(F \in \mathscr{F}(B)\). Then, there exist prime filters \(P, Q \in \mathscr{F}(B)\) such that \(\mathscr{R}_\mathbf{B}(F,P,F)\) and \(\mathscr{R}_\mathbf{B}(Q,F,F)\) and \(1 \in P, 1 \in Q\).
\end{proposition}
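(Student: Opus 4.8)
\emph{Proof sketch.} The plan is to obtain $P$ and $Q$ as prime extensions of the principal filter $[1) = \{b \in B : 1 \le b\}$, using \Cref{product1prime}, and then to read off the relation $\mathscr{R}_\mathbf{B}$ from the resulting product inclusion together with (RES).

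First I would check that $[1)$ is a proper filter of $\mathbf{B}$ containing $1$: it is plainly a filter and contains $1$, and it is proper since $\bot \notin [1)$ — otherwise $1 = \bot$, and then $a = 1 \otimes a = \bot \otimes a = \bot$ for every $a \in B$, so $\mathbf{B}$ would be trivial, contradicting the existence of the prime filter $F$. Next, since $\otimes$ is monotone and $1$ is its unit, $x \otimes y \ge x \otimes 1 = x$ for all $x \in F$ and $y \in [1)$, so (F1) gives $\{x \otimes y : x \in F,\ y \in [1)\} \subseteq F$; symmetrically $\{x \otimes y : x \in [1),\ y \in F\} \subseteq F$. In both cases $F$ is prime.

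Now I would invoke \Cref{product1prime} twice. Applied to the proper filters $F$, $[1)$ and the prime filter $F$, it yields (as its ``$G'$'') a prime filter $P \supseteq [1)$ with $\{x \otimes y : x \in F,\ y \in P\} \subseteq F$; applied to the proper filters $[1)$, $F$ and the prime filter $F$, it yields (as its ``$F'$'') a prime filter $Q \supseteq [1)$ with $\{x \otimes y : x \in Q,\ y \in F\} \subseteq F$. Since $1 \in [1) \subseteq P$ and $1 \in [1) \subseteq Q$, both $P$ and $Q$ belong to $\mathscr{F}(B)$ and lie in $\mathscr{I}_\mathbf{B}$.

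It then remains to verify $\mathscr{R}_\mathbf{B}(F,P,F)$ and $\mathscr{R}_\mathbf{B}(Q,F,F)$. Because $\mathbf{B}$ is total, the disjuncts ``$\cdots = \infty$'' in the definition of $\mathscr{R}_\mathbf{B}$ never occur, so in each case only three implications are left. The first is exactly the product inclusion just obtained. For the two residuation implications I would use the standard consequences of (RES) that $a \otimes (a \multimap b) \le b$ and $(b \multimapinv a) \otimes a \le b$: for instance, if $a \in F$ and $a \multimap b \in P$, then $a \otimes (a \multimap b) \in F$ by the product inclusion, hence $b \in F$ by (F1); the remaining cases are handled the same way. (This last step is precisely what \Cref{r-eqv} packages — for prime filters the product inclusion already entails $\mathscr{R}_\mathbf{B}$.) The only place requiring care is the appeal to \Cref{product1prime}, i.e.\ confirming that $[1)$ is proper and that the third component of each triple is prime; everything after that is bookkeeping.
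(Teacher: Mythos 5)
Your proof is correct, but it takes a different route from the paper's. The paper proves the proposition from scratch: it runs a Kuratowski--Zorn argument on the family of filters $G$ satisfying both $1 \in G$ and $\{f \otimes g : f \in F,\ g \in G\} \subseteq F$ (seeded by $[1)$), and then shows a maximal such filter is prime by the same $P_a, P_b$ manoeuvre used in \Cref{product1prime} — essentially re-deriving that lemma with the condition $1 \in G$ built into the family. You instead observe that the condition $1 \in P$ comes for free from $[1) \subseteq P$, which lets you invoke \Cref{product1prime} as a black box (exactly as the paper itself does in \Cref{product2prime}) and avoid duplicating the maximality argument; the only new work is checking that $[1)$ is a proper filter and that $\{x \otimes y : x \in F,\ y \in [1)\} \subseteq F$ (and symmetrically), both of which you do correctly via monotonicity of $\otimes$ and (F1). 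The final step — passing from the product inclusion to the full three-clause relation $\mathscr{R}_\mathbf{B}$ via $a \otimes (a \multimap b) \le b$ and $(b \multimapinv a) \otimes a \le b$, i.e.\ \Cref{r-eqv} — is the same in both treatments. Your version is shorter and reuses existing machinery; the paper's is self-contained and does not depend on the precise phrasing of \Cref{product1prime}'s conclusion. Both are sound.
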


\begin{proof}
Let \(F \in \mathscr{F}(L)\), we show there exists a prime filter \(P\) such that \(1 \in P\) and \(\mathscr{R}_\mathbf{L}(F,P,F)\). The proof for \(\mathscr{R}_\mathbf{L}(Q,F,F)\) is similar.

Let \(\mathcal{E}\) be the family of filters of \(\mathbf{L}\) such that for every filter \(G \in \mathcal{E}\) we have \(1 \in G\) and \(f \otimes g \in F\) for all \(f \in F\) and \(g \in G\). Clearly, all filters in \(\mathcal{E}\) are proper. This family is nonempty, since \([1) \in \mathcal{E}\). One shows that \(\bigcup C\) is a filter for every nonempty chain \(C \subseteq \mathcal{E}\) analogously like in the proof of \Cref{sep-filter-strong}. We show \(\bigcup C \in \mathcal{E}\). Clearly, \(1 \in \bigcup C\). Let \(f \in F\) and \(g \in \bigcup C\). Then, \(g \in G\) for some \(G \in C\). So, \(f \otimes g \in F\).

By Kuratowski--Zorn's lemma, there exists \(P \in \mathcal{E}\), which is a maximal element of \(\mathcal{E}\). We show that \(P\) is a prime filter. Assume \(a \vee b \in P\). Suppose \(a, b \notin P\).

We consider \(P_a\) and \(P_b\). Clearly, \(P \subset P_a\) and \(P \subset P_b\). Since \(P\) is a maximal element of \(\mathcal{E}\), then \(P_a, P_b \notin \mathcal{E}\).

We have \(1 \in P_a, P_b\). Then, for some \(f_a \in F\) and some \(x \in P\), we have \(f_a \otimes (x \wedge a) \notin F\) and for some \(f_b \in F\) and some \(y \in P\) we have \(f_b \otimes (y \wedge b) \notin F\). Since \(f_a, f_b \in F\), then \(f_a \wedge f_b \in F\), by (F2). Since \(a \vee b \in P\), then \((x \wedge y) \wedge (a \vee b) = (x \wedge y \wedge a) \vee (x \wedge y \wedge b) \in P\). 

So, \((f_a \wedge f_b) \otimes [(x \wedge a) \vee (y \wedge b)] \in F\). As a consequence:
\[
(f_a \wedge f_b) \otimes [(x \wedge a) \vee (y \wedge b)] = ((f_a \wedge f_b) \otimes (x \wedge a)) \vee ((f_a \wedge f_b) \otimes (y \wedge b))
\]
Because \(F\) is a prime filter, then \((f_a \wedge f_b) \otimes (x \wedge a) \in F\) or \((f_a \wedge f_b) \otimes (y \wedge b) \in F\). Assume \((f_a \wedge f_b) \otimes (x \wedge a) \in F\). Then \(f_a \otimes (x \wedge a) \in F\), by (F1) and monotonicity of \(\otimes\). Assume \((f_a \wedge f_b) \otimes (y \wedge b) \in F\). Then \(f_b \otimes (y \wedge b) \in F\). Both possibilites lead to the contradiction with assumptions. Hence, \(a \in P\) or \(b \in P\).

Therefore, \(\mathscr{R}_\mathbf{L}(F,P,F)\).
\end{proof}

\begin{lemma}
\label{r-eqv}
Let \(\mathbf{B}\) be a total residuated Boolean algebra and \(\mathfrak{F}_\mathbf{B} = (\mathscr{F}(B), \subseteq, \mathscr{R}_\mathbf{B})\) its associated residuated frame. Then, for \(F, G, H \in \mathscr{F}(B)\), the following are equivalent:
\begin{enumerate}[(i)]
\item \(\implies{a \in F \land b \in G}{ a \otimes b \in H}\) for all \(a, b \in B\)
\item \(\implies{a \in F \land a \multimap b \in G}{b \in H}\) for all \(a, b \in B\)
\item \(\implies{b \multimapinv a \in F \land a \in G }{b \in H}\) for all \(a, b \in B\)
\end{enumerate}
\end{lemma}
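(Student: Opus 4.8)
**Proof plan for Lemma \ref{r-eqv} (equivalence of the three conditions defining $\mathscr{R}_\mathbf{B}$ on a total residuated Boolean algebra).**

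The plan is to prove the chain $(i) \Rightarrow (ii) \Rightarrow (iii) \Rightarrow (i)$, using in each implication nothing more than the residuation law (RES), the fact that the underlying lattice is a Boolean algebra (so every $\neg a \in B$), and the defining properties of filters (F1)–(F3) plus the primeness characterization in Lemma \ref{bool-filter}. Each single implication is a short computation; the only care needed is to ``manufacture'' a witness element to which residuation can be applied.

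For $(i) \Rightarrow (ii)$: assume (i) and suppose $a \in F$ and $a \multimap b \in G$. Put $c = a \otimes (a \multimap b)$. Since $a \in F$ and $a \multimap b \in G$, condition (i) gives $c \in H$ (here $\otimes$ is total, so the ``$a \otimes b = \infty$'' disjunct never fires). But by (RES), from $a \multimap b \leq a \multimap b$ we get $a \otimes (a \multimap b) \leq b$, i.e.\ $c \leq b$; hence $b \in H$ by (F1). The implication for $(iii)$ (i.e.\ $(i) \Rightarrow (iii)$, if one prefers that route) is the mirror image using $b \multimapinv a \leq b \multimapinv a \Rightarrow (b \multimapinv a) \otimes a \leq b$.

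For $(ii) \Rightarrow (i)$ (or, organized as the cycle, $(ii) \Rightarrow (iii) \Rightarrow (i)$): here the witness trick is to exploit that $F,G,H$ are \emph{prime} filters on a Boolean algebra. Suppose (ii) holds, $a \in F$, $b \in G$; we want $a \otimes b \in H$. Suppose not, so $a \otimes b \notin H$; since $H$ is prime on a Boolean algebra, $\neg(a \otimes b) \in H$ by Lemma \ref{bool-filter}. Now set $d = a \multimap (a \otimes b)$. By (RES) (instantiating the ``if'' direction with $a \otimes b \leq a \otimes b$) we have $b \leq a \multimap (a \otimes b) = d$, so $d \in G$ since $b \in G$. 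Applying (ii) to $a \in F$ and $d \in G$ yields $a \otimes b \in H$ after noting $d = a \multimap(a\otimes b)$ is exactly the formula whose $\multimap$-premise gives $a \otimes b$; more directly, (ii) with the pair $a$, $d$ gives $a \otimes b \in H$, contradiction. (One can instead run $(ii) \Rightarrow (iii)$ and $(iii) \Rightarrow (i)$, each by the same pattern with the roles of $\multimap$ and $\multimapinv$ swapped; the Boolean/prime structure is only needed if one wants to argue contrapositively, otherwise the monotone witness $b \leq a \multimap(a \otimes b)$ suffices and primeness is not even required for these two steps.)

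The main obstacle — really the only subtlety — is bookkeeping around \emph{totality vs.\ partiality}: the definition of $\mathscr{R}_\mathbf{B}$ carries the disjuncts ``$a \otimes b = \infty$'', ``$a \multimap b \neq \infty$'', etc., but the lemma hypothesizes a \emph{total} algebra, so all these side conditions are vacuously satisfied and the three displayed conditions (i)–(iii) are literally the three conjuncts of $\mathscr{R}_\mathbf{B}$ with the $\infty$-clauses erased. I would state this reduction explicitly at the start of the proof, then carry out the three short residuation computations above. No Zorn's-lemma argument or filter extension is needed here — this lemma is purely the algebraic identity underlying the later frame constructions, and it is where Corollary \ref{product2prime} gets its final ``$\mathscr{R}_\mathbf{L}(F',G',H)$'' conclusion.
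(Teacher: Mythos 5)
Your proof is correct and follows essentially the same route as the paper: both directions rest on the two residuation inequalities \(a \otimes (a \multimap b) \leq b\) and \(b \leq a \multimap (a \otimes b)\) together with (F1), exactly as in the paper's argument. The contrapositive detour through primeness and \(\neg(a \otimes b) \in H\) in your \((ii) \Rightarrow (i)\) step is never actually used and can be deleted, as you yourself note at the end.
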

\begin{proof}
We assume (i). Let \(a \in F\) and \(a \multimap b \in G\). Since \(\mathscr{R}_\mathbf{B}(F,G,H)\), \(a \otimes (a \multimap b) \in H\) and then \(b \in H\), because \(a \otimes (a \multimap b) \leq b\). Hence (ii) holds. Now we assume (ii). Let \(a \in F\) and \(b \in G\). Since \(b \leq a \multimap (a \otimes b)\), then \(a \multimap (a \otimes b) \in G\), so, by (ii), \(a \otimes b \in H\) and (i) holds. The proof of equivalence of (i) and (iii) is similar.
\end{proof}

We construct a residuated Boolean algebras from the arbitrary residuated frame \(\mathfrak{F} = (P, I, R)\). Let \(X, Y \subseteq P\), we define:
\begin{gather*}
X \ocirc Y = \left\{ z \in P : \exists_{x, y \in P} x \in X \land y \in Y \land R(x, y, z) \right\}\\
X \omultimap Y = \left\{ y \in P:  \forall_{x, z \in P} \implies{R(x, y, z) \land x \in X }{z \in Y}\right\}\\
Y \omultimapinv X = \left\{ x \in P:  \forall_{y, z \in P}\implies{ R(x, y, z) \land y \in X}{z \in Y} \right\}
\end{gather*}
Then, \(\mathbf{B}_\mathfrak{F} = (\powerset{P}, \ocirc, \omultimap, \omultimapinv, \cup, \cap, {^c}, I, P, \emptyset, \subseteq)\) is a residuated Boolean algebra, where \(X^c = \powerset{P} \setminus X\) for all \(X \in \powerset{P}\). We call it the \emph{complex Boolean algebra of the residuated frame \(\mathfrak{F}\)}.

\begin{lemma}
\label{r-exist}
Let \(\mathbf{B}\) be a total residuated Boolean algebra and \(\mathfrak{F}_\mathbf{B} = (\mathscr{F}(B), \subseteq, \mathscr{R}_\mathbf{B})\) its associated residuated frame. Let \(a, b \in B\).
\begin{enumerate}[(1)]
\item If \(H \in \mathscr{F}(B)\) and \(a \otimes b \in H\), then there exist \(F, G \in \mathscr{F}(B)\) such that \(a \in F\), \(b \in G\) and \(\mathscr{R}_\mathbf{B}(F,G,H)\).
\item If \(G \in \mathscr{F}(B)\) and \(a \multimap b \not\in G\), then there exist \(F, H \in \mathscr{F}(B)\) such that \(a \in F\), \(b \not\in H\) and \(\mathscr{R}_\mathbf{B}(F,G,H)\).
\item If \(F \in \mathscr{F}(B)\) and \(b \multimapinv a \not\in F\), then there exist \(G, H \in \mathscr{F}(B)\) such that \(a \in G\), \(b \not\in H\) and \(\mathscr{R}_\mathbf{B}(F,G,H)\).
\end{enumerate}
\end{lemma}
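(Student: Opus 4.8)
The plan is to handle the three items uniformly: in each case I build an explicit, finitely generated filter that trivially satisfies the required product inclusion, check that it misses the element that has to be avoided, and then hand off to the extension lemmas of this subsection to upgrade everything to prime filters. Throughout I use that on a \emph{total} algebra $\mathscr{R}_\mathbf{B}(F,G,H)$ amounts to any one of the three equivalent conditions of \Cref{r-eqv}: there are no undefined values, so the three conjuncts in the definition of $\mathscr{R}_\mathbf{B}$ reduce to those three mutually equivalent statements, and hence establishing the product condition (i) is enough. I also use the standard fact that $\bot\otimes x=\bot=x\otimes\bot$, which follows from residuation since $\bot\le c\multimapinv x$ always holds.

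For (1): since $H$ is proper and $a\otimes b\in H$, neither $a$ nor $b$ equals $\bot$ (otherwise $a\otimes b=\bot\in H$), so $[a)$ and $[b)$ are proper filters. For $x\in[a)$ and $y\in[b)$ monotonicity of $\otimes$ gives $a\otimes b\le x\otimes y$, hence $x\otimes y\in H$ because $H$ is an upset; thus $\{x\otimes y:x\in[a),\,y\in[b)\}\subseteq H$. Now \Cref{product2prime} produces prime filters $F\supseteq[a)$ and $G\supseteq[b)$ with $\mathscr{R}_\mathbf{B}(F,G,H)$, and $a\in F$, $b\in G$, as required.

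For (2): put $H_0=\{z\in B:\exists y\in G\ (a\otimes y\le z)\}$. Because $G$ is $\wedge$-closed and $\otimes$ is monotone, $H_0$ is precisely the filter generated by $\{a\otimes y:y\in G\}$ (note $\otimes$ need not distribute over $\wedge$, but only monotonicity is used). If $b\in H_0$ then $a\otimes y\le b$ for some $y\in G$, i.e.\ $y\le a\multimap b$ by residuation, so $a\multimap b\in G$ since $G$ is an upset, contradicting the hypothesis; hence $b\notin H_0$, and in particular $H_0$ is proper. By \Cref{sep-filter-strong} choose a prime filter $H\supseteq H_0$ with $b\notin H$. Also $a\ne\bot$ (else $a\multimap b=\top\in G$), so $[a)$ is a proper filter, and for $x\in[a)$, $y\in G$ we have $x\otimes y\ge a\otimes y\in H$, hence $\{x\otimes y:x\in[a),\,y\in G\}\subseteq H$. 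Finally \Cref{product1prime} extends $[a)$ to a prime filter $F$ keeping $\{x\otimes y:x\in F,\,y\in G\}\subseteq H$, and \Cref{r-eqv} yields $\mathscr{R}_\mathbf{B}(F,G,H)$, with $a\in F$ and $b\notin H$.

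Item (3) is the mirror image of (2) using the other residuation: take $H_0=\{z\in B:\exists x\in F\ (x\otimes a\le z)\}$, note $b\in H_0$ would give $x\otimes a\le b$, i.e.\ $x\le b\multimapinv a$, hence $b\multimapinv a\in F$, contradicting the hypothesis; then extend to a prime $H\supseteq H_0$ with $b\notin H$ via \Cref{sep-filter-strong}, observe $\{x\otimes y:x\in F,\,y\in[a)\}\subseteq H$, and extend $[a)$ to a prime $G$ via \Cref{product1prime}. I expect the only genuine content to be the description of $H_0$ as a filter and the residuation argument showing it avoids the target element; primeness is entirely delegated to \Cref{sep-filter-strong}, \Cref{product1prime} and \Cref{product2prime}, and the only point to watch is that each principal filter used is proper, which in every case follows from the stated hypothesis by ruling out $a=\bot$ (or $b=\bot$).
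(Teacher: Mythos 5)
Your proposal is correct and follows essentially the same route as the paper: principal filters $[a)$, $[b)$ plus \Cref{product2prime} for item (1), the filter generated by $\{a\otimes y: y\in G\}$ shown to avoid $b$ via residuation, then \Cref{sep-filter-strong} and \Cref{product1prime} for items (2) and (3). Your added checks (properness of the principal filters, and the reduction of $\mathscr{R}_\mathbf{B}$ to the single product condition via \Cref{r-eqv} in the total case) are details the paper leaves implicit, but they do not change the argument.
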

\begin{proof}
We show (i). Since \(a \otimes b \in H\), then \(x \otimes y \in H\) for all \(a \leq x\) and \(b \leq y\). So, \(\{ x \otimes y: x \in [a) \land y \in [b)\} \subseteq H\) and, by \Cref{product2prime}, there exist prime filters \(F, G\) such that \(\mathscr{R}_\mathbf{B}(F,G,H)\).

We show (ii). Let \(G\) be a prime filter such that \(a \multimap b \notin G\). We consider \(aG = \{ a \otimes x : x \in G\}\). We extend \(aG\) to be filter. Let \(Q = \{x \in L: \exists_{y \in aG} y \leq x \}\). Clearly, (F1) holds. Let \(x, y \in Q\). Then, for some \(x', y' \in G\) we have \(a \otimes x' \leq x\) and \(a \otimes y' \leq y\). Since \(x', y' \in G\), then \(x' \wedge y' \in G\) and \(a \otimes (x' \wedge y') \in aG\). So:
\[
a \otimes (x' \wedge y') \leq (a \otimes x') \wedge (a \otimes y') \leq x \wedge y
\]
Hence, \(x \wedge y \in Q\). We show \(b \notin Q\). Suppose \(b \in Q\), then, for some \(x \in G\), \(a \otimes x \leq b\). By (RES), \(x \leq a \multimap b\). Hence, \(a \multimap b \in G\) -- contradiction. So, \(Q\) is a filter and \(b \notin Q\). By \Cref{sep-filter-strong}, there exists a prime filter \(H\) such that \(Q \subseteq H\) and \(b \notin H\). So, we have \(\{ x \otimes y : x \in [a) \land y \in G\} \subseteq H\). By \Cref{product1prime}, there exists a prime filter \(F\) such that \(\mathscr{R}_\mathbf{L}(F, G, H)\).

One shows (iii) analogously.
\end{proof}

\begin{lemma}
\label{sep-filter-partial}
Let \(\mathbf{B}\) be a partial residuated Boolean algebra and let \(a, b \in L\) be such that \(a \not\leq b\). There exists a prime filter \(F \subseteq B\) such that \(a \in F\) and \(b \notin F\).
\end{lemma}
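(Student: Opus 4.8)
The plan is to reduce the partial case to the total one by first embedding the partial residuated Boolean algebra $\mathbf{B}$ into a total residuated Boolean algebra $\mathbf{B}'$, as guaranteed by the very definition of a partial residuated Boolean algebra. Under the embedding $\iota$, the condition $a \not\leq b$ in $\mathbf{B}$ transfers to $\iota(a) \not\leq \iota(b)$ in $\mathbf{B}'$, because $\leq$ is among the relations preserved and reflected by an embedding. Since $\mathbf{B}'$ is in particular a distributive lattice, Corollary~\ref{sep-filter} (or Lemma~\ref{bool-filter} together with Lemma~\ref{sep-filter-strong}) yields a prime filter $F' \subseteq B'$ with $\iota(a) \in F'$ and $\iota(b) \notin F'$.

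The second step is to pull $F'$ back along $\iota$: set $F = \iota^{-1}(F') = \{x \in B : \iota(x) \in F'\}$. First I would check that $F$ is a filter of the partial lattice $\mathbf{B}$ in the modified sense, i.e.\ that it satisfies (F1) and (F2$'$). Property (F1) is immediate: if $x \leq y$ in $\mathbf{B}$ and $x \in F$, then $\iota(x) \leq \iota(y)$ in $\mathbf{B}'$ and $\iota(x) \in F'$, so $\iota(y) \in F'$ and $y \in F$. For (F2$'$), if $x, y \in F$ and $x \wedge y \neq \infty$ in $\mathbf{B}$, then by the embedding condition (ii) we have $\iota(x \wedge y) = \iota(x) \wedge \iota(y) \in F'$, since $F'$ is closed under meets; hence $x \wedge y \in F$. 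Primeness is handled via the Boolean condition (FB): for any $x \in B$ we have $\neg x \neq \infty$ and $\neg x \in B$ by the definition of partial residuated Boolean algebra, and $\iota(\neg x) = \neg \iota(x)$ by condition (ii); since $F'$ is prime, exactly one of $\iota(x)$, $\neg \iota(x)$ lies in $F'$, so exactly one of $x$, $\neg x$ lies in $F$, which is the required primeness condition. Finally $a \in F$ and $b \notin F$ follow directly from $\iota(a) \in F'$ and $\iota(b) \notin F'$, and $F$ is proper because $b \notin F$.

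The main obstacle is bookkeeping around the partiality: one must take care that $\wedge$ and $\neg$ behave correctly under $\iota$ only where they are defined, and invoke the defining clause of "partial residuated Boolean algebra" precisely where totality of $\neg$ on $B$ is needed (for primeness). No genuinely new idea is required beyond composing the embedding with the total separation result; the delicate point is simply verifying that the preimage of a prime filter is a prime filter \emph{in the weakened partial sense}, which is exactly where conditions (F2$'$) and (FB) were tailored.
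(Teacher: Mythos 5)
Your proposal is correct and follows essentially the same route as the paper: embed $\mathbf{B}$ into a total residuated Boolean algebra, apply \Cref{sep-filter} there, and pull the resulting prime filter back along the embedding. The only difference is that you spell out the verification that the preimage satisfies (F1), (F2$'$) and (FB), which the paper compresses into a single ``clearly.''
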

\begin{proof}
By definition of a partial residuated Boolean algebra, there exists a total residuated Boolean algebra \(\mathbf{B'}\) such that \(\iota\) is an embedding of \(\mathbf{B}\) into \(\mathbf{B'}\). Then, by \Cref{sep-filter}, there exists a prime filter \(F \subseteq B'\) such that \(a \in F\) and \(b \notin F\). Clearly, \(\iota^{-1}(F)\) is a prime filter of \(\mathbf{B}\) and \(a \in \iota^{-1}(F)\) and \(b \notin \iota^{-1}(F)\).
\end{proof}

\begin{proposition}
Let \(\mathbf{B} = (B, \otimes, \multimap, \multimapinv, \vee, \wedge, \neg, 1, \top, \bot, \leq)\) be a partial residuated Boolean algebra. Let \(\mathbf{B}_{\mathfrak{F}_\mathbf{B}}\) be the complex Boolean algebra of the associated residuated frame. We define \(\iota(a) = \{ F \in \mathcal{F}_B : a \in F \}\) for all \(a \in B\). Then, \(\iota\) is an embedding.
\end{proposition}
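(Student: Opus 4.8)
The plan is to verify, clause by clause, that $\iota$ is an embedding, with the complex Boolean algebra $\mathbf{B}_{\mathfrak{F}_\mathbf{B}}$ (a total residuated Boolean algebra) as the target. Fix, as provided by the definition of a partial residuated Boolean algebra, an embedding $\iota_0$ of $\mathbf{B}$ into a total residuated Boolean algebra $\mathbf{B'}$; identifying $B$ with its $\iota_0$-image I regard $B$ as a subset of $B'$ on which $\leq$, $\neg$ and the constants $1,\top,\bot$ agree with $\mathbf{B'}$ and on which $\otimes,\multimap,\multimapinv,\vee,\wedge$ agree with the operations of $\mathbf{B'}$ wherever they are defined. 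The recurring device is to push existence questions about prime filters of $\mathbf{B}$ up to the ultrafilters of $\mathbf{B'}$ and pull the answers back along $\iota_0$; here one uses that $\hat F \cap B$ is a prime filter of $\mathbf{B}$ for every ultrafilter $\hat F$ of $\mathbf{B'}$, and that once a prime filter $F$ of $\mathbf{B}$ is extended to an ultrafilter $\hat F$ of $\mathbf{B'}$ one automatically has $\hat F \cap B = F$, since for $x \in B \setminus F$ condition (FB) puts $\neg x \in F \subseteq \hat F$, whence $x \notin \hat F$.

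The order clause comes first. If $a \leq b$ then (F1) gives $\iota(a) \subseteq \iota(b)$; if $a \not\leq b$ then \Cref{sep-filter-partial} provides a prime filter $F$ with $a \in F$ and $b \notin F$, so $F \in \iota(a) \setminus \iota(b)$. Hence $\iota(a) \subseteq \iota(b)$ iff $a \leq b$, and as $\leq$ is a partial order this also makes $\iota$ injective. For the constants, $\iota(1) = \mathscr{I}_\mathbf{B}$ by definition, $\iota(\top) = \mathscr{F}(B)$ because $\top$ belongs to every prime filter, and $\iota(\bot) = \emptyset$ because a prime filter is proper (if $\bot$ lay in it, (F1) would make it all of $B$). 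For negation, which is total on $B$, the identity $\iota(\neg a) = \iota(a)^c$ is just (FB). For meet, assuming $a \wedge b \neq \infty$, the inclusion $\iota(a \wedge b) \subseteq \iota(a) \cap \iota(b)$ is (F1) applied to $a \wedge b \leq a,b$ and the reverse inclusion is (F2') together with $a \wedge b \neq \infty$. For join, assuming $a \vee b \neq \infty$, the inclusion $\iota(a) \cup \iota(b) \subseteq \iota(a \vee b)$ is (F1) applied to $a,b \leq a \vee b$, and the reverse inclusion asserts that a prime filter containing a defined join contains one of its joinands; in the Boolean setting this is forced by (FB) (equivalently, by \Cref{bool-filter}), which one sees by presenting the prime filter as the trace of an ultrafilter of $\mathbf{B'}$.

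There remains the multiplicative part of clause (ii). Unwinding the operations $\ocirc,\omultimap,\omultimapinv$ of the complex algebra from $\mathscr{R}_\mathbf{B}$, suppose $a \otimes b \neq \infty$. If $H \in \iota(a) \ocirc \iota(b)$, witnessed by prime filters $F \ni a$ and $G \ni b$ with $\mathscr{R}_\mathbf{B}(F,G,H)$, then the first conjunct of the definition of $\mathscr{R}_\mathbf{B}$, together with $a \otimes b \neq \infty$, gives $a \otimes b \in H$, so $H \in \iota(a \otimes b)$. Conversely, if $a \otimes b \in H$, extend $H$ to an ultrafilter $\hat H$ of $\mathbf{B'}$ (its trace on $B$ is again $H$), apply part (1) of \Cref{r-exist} to the total algebra $\mathbf{B'}$ to obtain ultrafilters $\hat F \ni a$ and $\hat G \ni b$ with $\mathscr{R}_{\mathbf{B'}}(\hat F,\hat G,\hat H)$, and set $F = \hat F \cap B$, $G = \hat G \cap B$. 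Then $F$ and $G$ are prime filters of $\mathbf{B}$, $a \in F$, $b \in G$, and, reading off from $\mathscr{R}_{\mathbf{B'}}(\hat F,\hat G,\hat H)$ the residuation conditions supplied by \Cref{r-eqv}, one obtains $\mathscr{R}_\mathbf{B}(F,G,H)$; hence $H \in \iota(a) \ocirc \iota(b)$. The cases of $\omultimap$ and $\omultimapinv$ are symmetric, using parts (2) and (3) of \Cref{r-exist}. Collecting the clauses, $\iota$ is an embedding.

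I expect the main obstacle to be the ``existence of witnessing prime filters'' half of the multiplicative clause. \Cref{product1prime}, \Cref{product2prime} and \Cref{r-exist} are stated for total algebras, and their proofs rely on distributive manipulations such as rewriting $(x \wedge y \wedge a) \otimes (z_1 \wedge z_2)$ as a join of two products, so they cannot be rerun verbatim inside a partial algebra, where the meets involved need not exist. My plan sidesteps this by transporting the problem to the total envelope $\mathbf{B'}$ and tracing back, which reduces the difficulty to the single bookkeeping fact that prime filters of $\mathbf{B}$ correspond, via $\hat F \mapsto \hat F \cap B$, to ultrafilters of $\mathbf{B'}$ in a way compatible with the ternary relations. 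The non-routine content is concentrated there --- chiefly in checking that a prime filter of $\mathbf{B}$ genuinely lifts to an ultrafilter of $\mathbf{B'}$ (which is where (FB) and the primeness condition on joins are indispensable) and that $\mathscr{R}_{\mathbf{B'}}$ descends to $\mathscr{R}_\mathbf{B}$ under restriction; once these are in hand, everything else is a matter of reindexing the lemmas already established.
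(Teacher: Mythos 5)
Your proposal is correct and follows essentially the same route as the paper: order and injectivity via \Cref{sep-filter-partial}, the constants and negation via (FB), the lattice operations via primeness, and the multiplicative operations via \Cref{r-exist} and \Cref{r-eqv}. The one point where you go beyond the paper is that you explicitly lift prime filters of the partial algebra to ultrafilters of the total envelope $\mathbf{B'}$ and trace back (checking $\hat F \cap B = F$ via (FB)) before invoking \Cref{r-exist}, whereas the paper applies that lemma to the partial algebra directly even though it is stated for total ones --- your version makes rigorous a step the paper leaves implicit.
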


\begin{proof}
Let \(a \leq b\). Then, for all \(H \in \iota(a)\), we have \(b \in H\), so \(H \in \iota(b)\). Hence, \(\iota(a) \subseteq \iota(b)\). Let \(a \not\leq b\). By \Cref{sep-filter-partial}, there exists a prime filter \(H\) such that \(a \in H\) and \(b \notin H\). Hence, \(\iota(a) \not\subseteq \iota(b)\). Therefore, \(a \leq b\) iff \(\iota(a) \subseteq \iota(b)\). As a consequence, \(\iota\) is injective.

Since prime filters are proper filters, \(\iota(\bot) = \emptyset\). \(\top\) is an element of every filter, so \(\iota(\top) = \mathscr{F}(B)\).

Let \(a, b \in B\) and \(a \otimes b \neq \infty\). By definition: 
\[
\iota(a) \ocirc \iota(b) = \left\{ H \in \mathscr{F}(B) : \exists_{F, G \in \mathscr{F}(B)} F \in \iota(a) \land G \in \iota(b) \land \mathscr{R}_\mathbf{B}(F, G, H) \right\}.
\]
We show \(\iota(a \otimes b) \subseteq \iota(a) \ocirc \iota(b)\). Let \(H \in \iota(a \otimes b)\). Then, \(a \otimes b \in H\) and by \Cref{r-exist}(i), there exist \(F, G \in \mathscr{F}(L)\) such that \(a \in F\), i.e. \(F \in \iota(a)\) and \(b \in G\), i.e. \(G \in \iota(b)\) and \(\mathscr{R}_\mathbf{B}(F,G,H)\).

We show \(\iota(a) \ocirc \iota(b) \subseteq \iota(a \otimes b)\). Let \(H \in \iota(a) \ocirc \iota(b)\). Then, for some \(F \in \iota(a)\) and \(G \in \iota(b)\) we have \(\mathscr{R}_\mathbf{B}(F,G,H)\). In particular, \(a \in F\), \(b \in G\), so \(a \otimes b \in H\), by definition of \(\mathscr{R}_\mathbf{B}\). Hence, \(H \in \iota(a \otimes b)\).

For \(a \multimap b\) and \(a \multimapinv b\) we prove analogously, using (ii) and (iii) of \Cref{r-exist} and \Cref{r-eqv}.

Let \(a \vee b \neq \infty\). We show \(\iota(a \vee b) \subseteq \iota(a) \cup \iota(b)\). Let \(H \in \iota(a \vee b)\), then \(a \vee b \in H\). Since \(H\) is a prime filter, \(a \in H\) or \(b \in H\). Hence, \(H \in \iota(a)\) or \(H \in \iota(b)\). Conversely, let \(a \in H\) or \(b \in H\). Then, \(a \vee b \in H\), by (F1). So, \(\iota(a) \cup \iota(b) \subseteq \iota(a \vee b)\).

Let \(a \wedge b \neq \infty\). Let \(H \in \iota(a \wedge b)\). Then, \(a \in H\) and \(b \in H\), by (F1). Hence, \(H \in \iota(a)\) and \(H \in \iota(b)\), i.e. \(H \in \iota(a)\). Conversely, let \(H \in \iota(a)\). Then, by (F2'), \(a \wedge b \in H\), so \(H \in \iota(a \wedge b)\).
\end{proof}

The following theorem allows us to identify the partial residuated Boolean algebras. Its proof is a merge of the proofs from \cite{shkatov} and \cite{vanalten}. We skip identical parts and we focus on nontrivial differences.

\begin{theorem}
\label{partial-B1}
Let \(\mathbf{B} = (B, \otimes, \multimap, \multimapinv, \vee, \wedge, \neg, 1, \top, \bot, \leq)\) be a partial structure such that \(\neg a \neq \infty\), \(\neg a \in B\), \(a \vee \neg a = \top\), \(a \wedge \neg a = \bot\) and \(1 \otimes a = a = a \otimes 1\) for all \(a \in B\). Then, \(\mathbf{B}\) is a partial unital residuated Boolean algebra if, and only if, it is a partial bounded lattice and there exists a set \(\mathcal{F}\) of prime filters of \(\mathbf{B}\) and a set \(\mathcal{I} \subseteq \mathcal{F}\) such that \(1 \in F\) for all \(F \in \mathcal{I}\) such that the following conditions hold:
\begin{marray}
\text{(S)} & \forall_{a, b \in L} \Big( \implies{a \not\leq b}{\exists_{F \in \mathcal{F}} a \in F \land b \not\in F} \Big)\\
\text{(M\(\otimes\))} & \forall_{H \in \mathcal{F}}\forall_{a,b \in L} \Big( \implies{a \otimes b \in H}{ \exists_{F,G \in \mathcal{F}} a \in F \land b \in G \land \mathscr{R}_\mathbf{L}(F,G,H)} \Big)\\
\text{(M\(\multimap\))} & \multicolumn{1}{l}{\forall_{G \in \mathcal{F}}\forall_{a,b \in L} \Big( \implies{a \multimap b \neq \infty \land a \multimap b \not\in G}{}\Big.}\\
& \multicolumn{1}{r}{\Big.\implies{}{\exists_{F,H \in \mathcal{F}} a \in F \land b \not\in H \land \mathscr{R}_\mathbf{L}(F,G,H)}\Big)}\\
\text{(M\(\multimapinv\))} & \multicolumn{1}{l}{\forall_{F \in \mathcal{F}}\forall_{a,b \in L} \Big(\implies{a \multimapinv b \neq \infty \land a \multimapinv b \not\in F}{}\Big.}\\
& \multicolumn{1}{r}{\Big.\implies{}{\exists_{G,H \in \mathcal{F}} a \in G \land b \not\in H \land \mathscr{R}_\mathbf{L}(F,G,H)}\Big)}\\
\text{(M1)} & \forall_{F \in \mathcal{F}} \exists_{G_1, G_2 \in \mathcal{I}} \Big( \mathscr{R}_\mathbf{L}(F,G_1,F) \land \mathscr{R}_\mathbf{L}(G_2, F, F) \Big)
\end{marray}
\end{theorem}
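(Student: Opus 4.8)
The statement is a biconditional, so the plan is to prove the two implications separately, and in each direction the hard work has essentially been localized to the lemmas and propositions already established. For the forward direction, assume $\mathbf{B}$ is a partial unital residuated Boolean algebra. Then by definition it embeds via some $\iota$ into a total residuated Boolean algebra $\mathbf{B}'$; in particular it is a partial bounded lattice. Take $\mathcal{F}$ to be (the preimages under $\iota$ of) the prime filters of $\mathbf{B}'$, equivalently the prime filters of $\mathbf{B}$ coming from $\mathbf{B}'$, and $\mathcal{I}$ those that contain $1$. Condition (S) is then exactly \Cref{sep-filter-partial}. Conditions (M$\otimes$), (M$\multimap$), (M$\multimapinv$) follow from \Cref{r-exist} applied in $\mathbf{B}'$ and pulled back along $\iota$ (using that $\iota$ preserves $\otimes$, $\multimap$, $\multimapinv$ wherever they are defined, and reflects $\leq$, so membership of $a\otimes b$, $a\multimap b$, $a\multimapinv b$ in a filter is detected correctly). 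Condition (M1) is \Cref{unital-filters} applied in $\mathbf{B}'$, again transported back. The only care needed here is bookkeeping about partiality: the premises of (M$\multimap$) and (M$\multimapinv$) carry the side condition $a\multimap b\neq\infty$ (resp.\ $a\multimapinv b\neq\infty$), which is precisely what lets us speak of $\iota(a\multimap b)$ in $\mathbf{B}'$.

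For the converse direction, assume $\mathbf{B}$ is a partial bounded lattice and that such $\mathcal{F}$, $\mathcal{I}$ with (S), (M$\otimes$), (M$\multimap$), (M$\multimapinv$), (M1) exist. The plan is to build the total algebra into which $\mathbf{B}$ embeds as the complex Boolean algebra $\mathbf{B}_\mathfrak{F}$ of the frame $\mathfrak{F} = (\mathcal{F}, \mathcal{I}, \mathscr{R}_\mathbf{B})$, and to take the candidate embedding $\iota(a) = \{F \in \mathcal{F} : a \in F\}$ exactly as in the Proposition preceding this theorem. First one checks $\mathfrak{F}$ is a residuated frame: (U1)--(U3) are immediate from the definition of $\mathscr{R}_\mathbf{B}$ (it is stated with equalities that are trivially monotone), (U4) is (M1), and (U5) follows from the fact that a prime filter containing $1$ behaves as a unit — if $\mathscr{R}_\mathbf{B}(F,G,H)$ with $1\in G$ then $a\in F$ forces $a = a\otimes 1 \in H$, and symmetrically, so $F\subseteq H$; the reverse inclusion uses that $1\in G$ together with $b\in H$ gives $b = b\otimes 1$, hence (via the residual clauses of $\mathscr{R}_\mathbf{B}$, cf.\ \Cref{r-eqv}) $b\in F$. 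Then $\mathbf{B}_\mathfrak{F}$ is a total residuated Boolean algebra by the construction recalled before \Cref{r-exist}. It remains to verify that $\iota$ is an embedding; this is where (S) and the (M$\cdot$) conditions do their work. Injectivity and reflection of $\leq$ use (S) in place of \Cref{sep-filter-partial}; preservation of $\vee$, $\wedge$, $\neg$, $\bot$, $\top$ is as in the earlier Proposition (primeness and properness of the filters in $\mathcal{F}$, plus (F2$'$) for partial $\wedge$); and preservation of $\otimes$, $\multimap$, $\multimapinv$ (wherever defined) replays the argument of that Proposition with \Cref{r-exist} replaced by (M$\otimes$), (M$\multimap$), (M$\multimapinv$) respectively, together with \Cref{r-eqv} to pass between the three equivalent formulations of $\mathscr{R}_\mathbf{B}$. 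Finally $\iota(1)$ must be checked: $1\in F$ iff $F\in\mathscr{I}_\mathbf{B}$, and one verifies $\iota(1) = \mathscr{I}_{\mathfrak{F}}$ is the frame's designated unit element $\mathcal{I}$ — the inclusion $\iota(1)\subseteq\mathcal{I}$ is trivial, and $\mathcal{I}\subseteq\iota(1)$ holds because $1\in F$ for all $F\in\mathcal{I}$ by hypothesis. Since $\mathbf{B}$ embeds into a total residuated Boolean algebra with $1\otimes a = a = a\otimes 1$ holding already in $\mathbf{B}$, it is a partial unital residuated Boolean algebra.

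The main obstacle I expect is the verification of (U5) and, relatedly, the handling of $\mathscr{I}_\mathbf{B}$: one must be careful that "prime filter containing $1$" genuinely functions as a two-sided unit for the ternary relation, which requires combining the product clause and both residual clauses in the definition of $\mathscr{R}_\mathbf{B}$ and invoking \Cref{r-eqv}-style reasoning at the level of the partial algebra. A secondary nuisance, pervasive but not deep, is tracking the $\neq\infty$ side conditions throughout: every time we assert that some compound term lies in or out of a filter we must know the term is defined, and the statements of (M$\multimap$), (M$\multimapinv$), the definition of $\mathscr{R}_\mathbf{B}$, and (F2$'$) have all been set up precisely so this bookkeeping goes through. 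Since the paper explicitly says it skips the parts identical to \cite{shkatov,vanalten} and focuses on the differences, the write-up can be correspondingly terse: state the two directions, point to \Cref{sep-filter-partial}, \Cref{r-exist}, \Cref{unital-filters}, \Cref{r-eqv} and the preceding Proposition for the bulk, and spell out only the unit-related points (U5), $\iota(1)=\mathcal{I}$, and the role of (F2$'$) for the partial meet.
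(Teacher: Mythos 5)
Your overall architecture is the same as the paper's: forward direction via the embedding together with \Cref{sep-filter-partial}, \Cref{r-exist} and \Cref{unital-filters}; converse direction via the complex algebra of the frame \((\mathcal{F}, \mathcal{I}, \mathscr{R}_\mathbf{B})\) with \(\iota(a) = \{F \in \mathcal{F} : a \in F\}\). But the two steps you yourself flag as delicate are exactly where your proposed mechanism breaks down, and in both cases the missing ingredient is the same: the Boolean filter condition (FB). For (U5), your argument for \(F \subseteq H\) is fine (\(a \in F\) and \(1 \in G\) give \(a = a \otimes 1 \in H\) by the product clause, since \(a \otimes 1\) is defined by hypothesis). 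But your route to \(H \subseteq F\) --- ``via the residual clauses of \(\mathscr{R}_\mathbf{B}\), cf.\ \Cref{r-eqv}'' --- cannot work: all three clauses in the definition of \(\mathscr{R}_\mathbf{B}\) have their conclusion in \(H\); none of them propagates membership from \(H\) back into \(F\) or \(G\), and \Cref{r-eqv} is stated only for total algebras. The paper closes this gap with negation: since \(a \vee \neg a = \top\) and \(a \wedge \neg a = \bot\) are total in \(\mathbf{B}\), every prime filter of \(\mathbf{B}\) satisfies (FB) (this is the content of \Cref{bool-filter}); so if \(a \in H\) but \(a \notin F\), then \(\neg a \in F\), hence \(\neg a = \neg a \otimes 1 \in H\), hence \(\bot = a \wedge \neg a \in H\), contradicting properness of \(H\).

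The same problem recurs at \(\iota(1) = \mathcal{I}\), where you have the two inclusions the wrong way around: \(\mathcal{I} \subseteq \iota(1)\) is the trivial one (every member of \(\mathcal{I}\) contains \(1\) by hypothesis), whereas \(\iota(1) \subseteq \mathcal{I}\) is genuinely nontrivial --- \(\mathcal{I}\) is merely \emph{some} given subset of \(\mathcal{F}\) whose members contain \(1\); nothing in the hypotheses says it contains \emph{every} \(F \in \mathcal{F}\) with \(1 \in F\). The paper proves this inclusion by taking \(F \in \iota(1)\), invoking (M1) to obtain \(G \in \mathcal{I}\) with \(\mathscr{R}_\mathbf{B}(F,G,F)\), deducing \(G \subseteq F\) from the product clause (\(1 \in F\) and \(g \in G\) give \(1 \otimes g = g \in F\)), and then getting \(F \subseteq G\) by the same (FB) argument as above, so that \(F = G \in \mathcal{I}\). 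These two negation-based steps (together with the one-line verification \(\iota(\neg a) = \iota(a)^c\) via (FB), which the preceding Proposition does not actually cover) are precisely the ``nontrivial differences'' from \cite{shkatov} that the paper's proof is built to exhibit; without them your write-up, however terse, does not go through.
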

\begin{proof}
Let \(\mathbf{B} = (B, \otimes, \multimap, \multimapinv, \vee, \wedge, \neg, 1, \top, \bot, \leq)\) be a partial unital residuated Boolean algebra and let \(\mathbf{A} = (A, \ocirc, \omultimap, \omultimapinv, \ovee, \owedge, \oneg, \oi, \otop, \obot, \oleq)\) be a total unital residuated Boolean algebra and let \(\iota\) be an embedding of \(\mathbf{B}\) into \(\mathbf{A}\). We show that there exists a set \(\mathcal{F}\) of prime filters of \(\mathbf{B}\) that satisfies (S), (M\(\otimes\)), (M\(\multimap\)), (M\(\multimapinv\)) and (M1). We define:
\[
\mathcal{F} = \{\iota^{-1}(F): F\text{ is a prime filter of }\mathbf{A}\}
\]
For better readability we use the following notion: let \(F\) be a prime filter of \(\mathbf{A}\), then \(F_\iota = \iota^{-1}(F)\). We prove (S), (M\(\otimes\)), (M\(\multimap\)) and (M\(\multimapinv\)) like in \cite{shkatov}.

We show there exists \(\mathcal{I} \subseteq \mathcal{F}\) such that (M1) holds. We define:
\[
\mathcal{I} = \{F \in \mathcal{F}: 1 \in F \}
\]
Let \(F_\iota \in \mathcal{F}\), then, by \Cref{unital-filters} there exists a prime filter \(G\) of \(\mathbf{A}\) such that \(1 \in G\) and \(\mathscr{R}_\mathbf{A}(F,G,F)\). Then, \(G_\iota \in \mathcal{I}\) and \(\mathscr{R}_\mathbf{B}(F_\iota, G_\iota, F_\iota)\). Similarly, there exists \(H\) such that \(H_\iota \in \mathcal{I}\) and \(\mathscr{R}_\mathbf{B}(H_\iota, F_\iota, F_\iota)\).

Now we assume \(\mathbf{B}\) is a partial structure satisfying the assumptions of the theorem. We construct the residuated Boolean algebra \(\mathbf{A}\) and the embedding of \(\mathbf{B}\) into \(\mathbf{A}\). We see \(\mathfrak{F} = (\mathcal{F}, \mathcal{I}, \mathscr{R}_\mathbf{B})\) satisfies (U1)--(U4). We show (U5). Let \(F, H \in \mathcal{F}\) and \(G \in \mathcal{I}\) be such that \(\mathscr{R}_\mathbf{B}(F, G, H)\). Then, for all \(a \in F\), since \(1 \in G\), we have \(a \otimes 1 \in H\), so \(F \subseteq H\). Suppose there exists \(a \in H\) such that \(a \notin F\). Then, by (FB), \(\neg a \in F\), which is impossible.

Let \(\mathbf{A} = (\powerset{\mathcal{F}}, \otimes, \multimap, \multimapinv, \cup, \cap, \mathcal{I}, \mathcal{F}, \emptyset, \subseteq)\) be the complex algebra of \(\mathfrak{F}\). We define the mapping \(\iota\) for every \(a \in L\) by \(\iota(a) = \{ F \in \mathcal{F}: a \in F \}\). We show \(\iota\) is an embedding.

Let \(a, b \in L\) and \(a \leq b\). Then, \(\iota(a) \subseteq \iota(b)\), by (F1). Let \(a \not\leq b\), then by (S) there exists \(F \in \mathcal{F}\) such that \(a \in F\) and \(b \not\in F\), so \(\iota(a) \not\subseteq \iota(b)\). Hence \(a \leq b\) iff \(\iota(a) \subseteq \iota(b)\) and \(\iota\) is injective.

One shows \(\iota\) preserves \(\otimes, \multimap, \multimapinv, \vee, \wedge, \top, \bot\), analogously like in \cite{shkatov}.

We show \(\iota(1) = \mathcal{I}\). The inclusion \(\mathcal{I} \subseteq \iota(1)\) is trivial, since \(1\) belongs to every element of \(\mathcal{I}\). Let \(F \in \iota(1)\). By (M1), there exists \(G \in \mathcal{I}\) such that \(\mathscr{R}_\mathbf{B}(F,G,F)\). Since \(1 \in F\), then \(G \subseteq F\). Suppose \(a \in F\) and \(a \notin G\). Then, by (FB), \(\neg a \in G\) and then \(\neg a \in F\), which is impossible. So, \(G = F\) and \(F \in \mathcal{I}\).

Let \(a \in B\), then \(\iota(\neg a) = \{ F \in \mathcal{F}: \neg a \in F\} = \{ F \in \mathcal{F}: a \not\in F \}\), by (FB). Thus, \(\{ F \in \mathcal{F}: a \not\in F \} = \{ F \in \mathcal{F}: a \in F\}^c\).
\end{proof}

\section{The upper bound of complexity}

In this section we show that the finitary consequence relation for BFNL is decidable in exponential time.

\begin{lemma}
\label{partial-expB1}
Let \(\mathbf{B} = (B, \otimes, \multimap, \multimapinv, \vee, \wedge, \neg, 1, \top, \bot, \leq)\) be a partial structure. We can verify whether \(\mathbf{B}\) is a partial residuated Boolean algebra in exponential time (depending on \(|B|\)).
\end{lemma}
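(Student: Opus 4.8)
The plan is to reduce the verification to checking the conditions of \Cref{partial-B1} and to bound the cost of each check by an exponential in $|B|$. First I would note that the first-order data of $\mathbf{B}$ — the tables of $\otimes, \multimap, \multimapinv, \vee, \wedge, \neg$ and the relation $\leq$, together with the designated elements $1, \top, \bot$ — all have size polynomial in $|B|$, and that verifying $\neg a \neq \infty$, $\neg a \in B$, $a \vee \neg a = \top$, $a \wedge \neg a = \bot$, and $1 \otimes a = a = a \otimes 1$ for all $a$ is clearly polynomial. Next I would check that $\mathbf{B}$ is a partial bounded lattice: by the characterization recalled after the definition of partial lattices (from \cite{shkatov}), this amounts to checking that $(B,\leq)$ is a poset, that $\bot$ and $\top$ are the bounds, and that wherever $a \vee b$ (resp. $a \wedge b$) is defined it is the supremum (resp. infimum) of $\{a,b\}$ — all of which is polynomial in $|B|$.

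The main work is to decide whether there exist sets $\mathcal{F}$ of prime filters of $\mathbf{B}$ and $\mathcal{I} \subseteq \mathcal{F}$ with $1 \in F$ for all $F \in \mathcal{I}$, satisfying (S), (M$\otimes$), (M$\multimap$), (M$\multimapinv$) and (M1). The key observation is that every prime filter of $\mathbf{B}$ is a subset of $B$, so there are at most $2^{|B|}$ of them; enumerating all subsets $F \subseteq B$ and testing whether each is a prime filter — i.e. checking (F1), (F2$'$) and (FB) — is exponential in $|B|$. I would therefore first compute the full set $\mathcal{P}$ of prime filters of $\mathbf{B}$ in exponential time, and then compute the relation $\mathscr{R}_\mathbf{B}$ on $\mathcal{P}$, which for each triple $(F,G,H) \in \mathcal{P}^3$ requires scanning all pairs $a,b \in B$ — again exponential overall, since $|\mathcal{P}|^3 \cdot |B|^2 \le 2^{3|B|}\cdot|B|^2$.

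It then remains to argue that, once $\mathcal{P}$ and $\mathscr{R}_\mathbf{B}$ are in hand, the existence of suitable $\mathcal{F}, \mathcal{I}$ can be decided in exponential time. Here the point is that one need not search over all exponentially-many candidate families: conditions (S), (M$\otimes$), (M$\multimap$), (M$\multimapinv$), (M1) are all of the form ``for every $\dots$ there exists a filter in $\mathcal{F}$ with $\dots$'', so they are monotone in $\mathcal{F}$ — enlarging $\mathcal{F}$ can only help — while no condition is violated by adding filters, since (S)–(M1) impose no upward constraint and the requirement $1 \in F$ for $F \in \mathcal{I}$ can always be met by taking $\mathcal{I} = \{F \in \mathcal{F} : 1 \in F\}$. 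Consequently it suffices to test whether the single choice $\mathcal{F} = \mathcal{P}$ (the set of \emph{all} prime filters) together with $\mathcal{I} = \{F \in \mathcal{P} : 1 \in F\}$ satisfies the five conditions; and checking each condition is a bounded alternation of quantifiers over $\mathcal{P}$ and over $B$, hence computable in time exponential in $|B|$. By \Cref{partial-B1}, $\mathbf{B}$ is a partial (unital) residuated Boolean algebra iff this test succeeds, so the whole procedure runs in exponential time.

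The step I expect to be the main obstacle is the monotonicity argument in the last paragraph: one has to be careful that (M$\multimap$) and (M$\multimapinv$) really are monotone, since $\mathscr{R}_\mathbf{B}(F,G,H)$ itself quantifies over $B$ but \emph{not} over $\mathcal{F}$, so adding filters to $\mathcal{F}$ does not change which triples stand in $\mathscr{R}_\mathbf{B}$ — it only adds witnesses — and hence taking $\mathcal{F}$ to be all prime filters is without loss of generality. Making this precise (and checking it does not clash with the proper/prime requirements, which are properties of individual filters and independent of $\mathcal{F}$) is the delicate point; everything else is bookkeeping about table sizes.
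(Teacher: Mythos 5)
There is a genuine gap in your final step. The reduction to \Cref{partial-B1}, the polynomial-time preliminary checks, the enumeration of all candidate subsets of $B$ to find the prime filters, and the precomputation of $\mathscr{R}_\mathbf{B}$ all match the paper's strategy and are correctly costed. But the claim that the five conditions are monotone in $\mathcal{F}$, so that it suffices to test the single family consisting of \emph{all} prime filters, is false. Only (S) has the purely existential form ``for every pair of elements there exists a filter in $\mathcal{F}$''. Each of (M$\otimes$), (M$\multimap$), (M$\multimapinv$) and (M1) begins with a \emph{universal} quantifier over members of $\mathcal{F}$ (for instance, over $H \in \mathcal{F}$ in (M$\otimes$)): every filter admitted into $\mathcal{F}$ becomes a new obligation that must itself find witnesses inside $\mathcal{F}$, so enlarging $\mathcal{F}$ adds constraints as well as witnesses. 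Concretely, there may be a prime filter $H$ of $\mathbf{B}$ with $a \otimes b \in H$ for which no pair of prime filters $F, G$ satisfies $a \in F$, $b \in G$ and $\mathscr{R}_\mathbf{B}(F,G,H)$; then the family of all prime filters fails (M$\otimes$), yet a subfamily omitting $H$ may satisfy all five conditions, in which case \Cref{partial-B1} certifies that $\mathbf{B}$ is a partial residuated Boolean algebra while your test wrongly rejects it. (Note that in the forward direction of \Cref{partial-B1} the witnessing family is the set of preimages of prime filters of the ambient total algebra, not the set of all prime filters of $\mathbf{B}$.)

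The paper repairs exactly this point with a greatest-fixed-point iteration: start from $\mathcal{F}_0$ equal to the set of all prime filters, and repeatedly discard every filter that violates (M$\otimes$), (M$\multimap$), (M$\multimapinv$) or (M1) relative to the \emph{current} family, until the family stabilises or becomes empty; this takes at most $2^{|B|}$ rounds, each costing $\mathcal{O}(2^{3|B|})$ per filter, for $\mathcal{O}(2^{5|B|})$ overall. Since the union of two families satisfying the M-conditions again satisfies them, the stable family is the unique maximal admissible one; and since (S) \emph{is} monotone increasing in $\mathcal{F}$, it holds for some admissible family iff it holds for this maximal one, so it is checked once at the end. Your cost accounting is otherwise sound, so replacing your single-shot test by this iteration is all that is needed.
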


By definition, \(\mathbf{B}\) is a partial residuated Boolean algebra if it is embeddable in a total residuated Boolean algebra. Such a total algebra may have the same set of elements, but may also have additional elements to satisfy all the properties. Hence, to check if \(\mathbf{B}\) is a partial residuated Boolean algebra by definition, we need to embed \(\mathbf{B}\) in every possible total structure until we find one where all the properties of residuated Boolean algebra hold. Even with the limit on the maximal size of such a structure, it would be 2EXPTIME problem.

Hence, we use \Cref{partial-B1} to idenify partial residuated Boolean algebras.

\begin{proof}
We provide an algorithm to verify whether \(\mathbf{B}\) is a partial residuated Boolean algebra. We follow the analogous lemma and its proof from \cite{shkatov}.

\begin{enumerate}[Step 1.]
\item We check whether \(\leq\) is a partial order, \(\top, \bot\) are bounds and the lattice operators are compatible with \(\leq\). If it fails, the algorithm stops with negative answer. It can be done in the polynomial time.

\item We check whether \(1 \otimes a = a\) and \(a \otimes 1 = a\) for all \(a \in L\). If it fails, the algorithm stops with negative answer. It can be done in the polynomial time.

\item We check whether \(\neg a \neq \infty\), \(\neg a \in B\), \(a \vee \neg a = \top\) and \(a \wedge \neg a = \bot\) for all \(a \in B\). If it fails, the algorithm stops with negative answer. It can be done in the polynomial time.

\item We construct a descreasing sequence of families of filters \(\mathcal{F}_n\). We construct the set \(\mathcal{F}_0\) of all prime filters of \(\mathbf{B}\). For every subset \(S \subseteq B\) we check the definition of prime filter. It can be done in \(\mathcal{O}(2^{2|B|})\).

We set \(i = 0\).
\begin{enumerate}[Step \arabic{enumi}.1]
\item We define \(\mathcal{I}_i = \{ F \in \mathcal{F}_i: 1 \in F \}\). For every prime filter \(F \in \mathcal{F}_i\) we check (M\(\otimes\)), (M\(\multimap\)), (M\(\multimapinv\)) and (M1). If every of these condition holds for \(F\), then we add \(F\) to set \(\mathcal{F}_{i+1}\).

\item If \(\mathcal{F}_{i+1} = \emptyset\), then the algorithm stops with negative answer. If \(\mathcal{F}_i = \mathcal{F}_{i+1}\), then the algorithm proceeds to the next step. Else, the algorithm goes back to Step \arabic{enumi}.1 with \(i+1\).

\end{enumerate}

Checking conditions for arbitrary \(F\) can be done in \(\mathcal{O}(2^{3|B|})\). Number of filters in \(\mathcal{F}_i\) is \(\mathcal{O}(2^{|B|})\). Maximal \(i\) does not exceed \(2^{|B|}\). So this step can be done in \(\mathcal{O}(2^{5|B|})\).

\item We check (S). If (S) does not hold, then the algorithm stops with negative answer. If (S) does not hold for a family of filters, then it does not hold for any smaller family. It can be done in \(\mathcal{O}(|B|^2 2^{|B|})\) time.
\end{enumerate}

\end{proof}

We notice that every sequent \(\Gamma \Rightarrow C\) can be represented as \(G \Rightarrow C\), where \(G\) is a formula arising from \(\Gamma\) by replacing every comma by \(\otimes\), every semicolon by \(\wedge\), \(\epsilon\) by 1 and \(\delta\) by \(\top\). So, we consider only sequents of this form.

Let \(G \Rightarrow A\) be a sequent. We define the size of \(G \Rightarrow A\) as follows:
\begin{marray}
& s(p) = 1 & s(1) = 1\\
& s(\top) = 1 & s(\bot) = 1\\
& \multicolumn{2}{c}{s(A \otimes B) = s(A) + s(B) + 1} \\
& s(A \multimap B) = s(A) + s(B) + 1 & s(A \multimapinv B) = s(A) + s(B) + 1\\
& s(A \wedge B) = s(A) + s(B) + 1 & s(A \vee B) = s(A) + s(B) + 1\\
& s(\neg A) = s(A) + 1 & s(A \to B) = s(A) + s(B) + 1\\
& \multicolumn{2}{c}{s(G \Rightarrow A) = s(G) + s(A)}
\end{marray}

\begin{definition}
\label{partial-val}
Let \(\mathbf{A}\) be a partial residuated Boolean algebra. Let \(\mu\) be a partial function from the free algebra of \(\mathcal{L}\)--formulas into \(\mathbf{A}\). We say \(\mu\) is a \emph{valuation}, if the following conditions hold:
\begin{itemize}
\item \(\mu(\top) = \top\), \(\mu(\bot) = \bot\);
\item \(\mu(1) = 1\);
\item if \(\mu(D \otimes E) \neq \infty\), then \(\mu(D) \neq \infty, \mu(E) \neq \infty\) and \(\mu(D \otimes E) = \mu(D) \otimes \mu(E)\);
\item if \(\mu(D \multimap E) \neq \infty\), then \(\mu(D) \neq \infty, \mu(E) \neq \infty\) and \(\mu(D \multimap E) = \mu(D) \multimap \mu(E)\);
\item if \(\mu(D \multimapinv E) \neq \infty\), then \(\mu(D) \neq \infty, \mu(E) \neq \infty\) and \(\mu(D \multimapinv E) = \mu(D) \multimapinv \mu(E)\);
\item if \(\mu(D \wedge E) \neq \infty\), then \(\mu(D) \neq \infty, \mu(E) \neq \infty\) and \(\mu(D \wedge E) = \mu(D) \wedge \mu(E)\);
\item if \(\mu(D \vee E) \neq \infty\), then \(\mu(D) \neq \infty, \mu(E) \neq \infty\) and \(\mu(D \vee E) = \mu(D) \vee \mu(E)\);
\item if \(\mu(\neg D) \neq \infty\), then \(\mu(D) \neq \infty\) and \(\mu(\neg D) = \neg \mu(D)\);
\end{itemize}

Let \(G \Rightarrow C\) be a sequent and \(\mu\) be a valuation. We say \(G \Rightarrow C\) is satisfied under the valuation \(\mu\), if \(\mu(G) \neq \infty\), \(\mu(C) \neq \infty\) and \(\mu(G) \leq \mu(C)\).
\end{definition}

Now we are ready to prove the \EXPTIME complexity of of the consequence relations. The following theorem was formulated in \cite{shkatov} in algebraic terms of satisfiability of quantifier--free first--order formulas of the language of residuated distributive lattices.
\begin{theorem}
\label{EXPTIME}
The finitary consequence relation of BFNL is \EXPTIME.
\end{theorem}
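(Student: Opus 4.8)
The plan is to reduce deciding ``$\Phi \vdash G \Rightarrow C$ in BFNL'' to a bounded search over small partial residuated Boolean algebras, using \Cref{partial-expB1} to run each step of that search in exponential time. First I would fix the input, a finite set $\Phi$ of sequents (each put, by the remark above, into the normal form $D \Rightarrow E$ with $D$ a formula) together with a sequent $G \Rightarrow C$, and write $n$ for its total size. Let $\Sigma$ be the set of all subformulas of formulas occurring in $\Phi \cup \{G \Rightarrow C\}$, together with $1,\top,\bot,\neg 1$ and $\neg A$ for every subformula $A$; then $|\Sigma| = \mathcal{O}(n)$, $\Sigma$ is closed under subformulas, and — using $\neg\neg x = x$ in every Boolean algebra — for any residuated Boolean algebra $\mathbf{A}$ with valuation $\nu$ the set $\nu(\Sigma)$ is closed under $\neg^{\mathbf{A}}$. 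The core of the proof is the claim that $\Phi \not\vdash G \Rightarrow C$ if and only if there exist a partial residuated Boolean algebra $\mathbf{B}$ with $|B| \le |\Sigma|$ and a valuation $\mu$ (in the sense of \Cref{partial-val}) defined on all of $\Sigma$ such that $\mu$ validates every sequent of $\Phi$ while $\mu(G) \not\le \mu(C)$.

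For the ``if'' direction one embeds $\mathbf{B}$ into a total residuated Boolean algebra $\mathbf{A}$ via some $\iota$ (available by the definition of a partial residuated Boolean algebra), sets $\nu(p) = \iota(\mu(p))$ on the variables occurring in $\Sigma$ and extends $\nu$ arbitrarily to a total valuation; an induction on formula complexity, using the clauses of \Cref{partial-val} for $\mu$ and the fact that $\iota$ preserves operations and constants, gives $\nu(A) = \iota(\mu(A))$ for all $A \in \Sigma$. Since $\iota$ both preserves and reflects $\le$, $\nu$ is a valuation in a total residuated Boolean algebra validating $\Phi$ and refuting $G \Rightarrow C$, whence $\Phi \not\vdash G \Rightarrow C$ by soundness of the calculus.

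For the ``only if'' direction, completeness of BFNL with respect to residuated Boolean algebras (see \cite{galatos2017}) yields a total residuated Boolean algebra $\mathbf{A}$ and a valuation $\nu$ validating $\Phi$ with $\nu(G) \not\le \nu(C)$. Let $\mathbf{B}$ be the partial substructure of $\mathbf{A}$ carried by $\nu(\Sigma)$, with $\le$ inherited from $\mathbf{A}$ and each operation declared undefined exactly when its $\mathbf{A}$-value escapes $\nu(\Sigma)$. Because $\Sigma$ contains $1,\top,\bot,\neg 1$ and $\nu(\Sigma)$ is closed under $\neg^{\mathbf{A}}$, the structure $\mathbf{B}$ satisfies the additional conditions in the definition of a partial residuated Boolean algebra — in particular $\neg b \in B$ for every $b \in B$, and $b \vee \neg b$, $b \wedge \neg b$ are defined and equal $\top$, $\bot$ because $\top,\bot \in \nu(\Sigma)$ — and $\mu := \nu|_{\Sigma}$ is the required valuation, with $|B| \le |\Sigma|$.

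Given the claim, the algorithm enumerates every partial structure $\mathbf{B}$ on a universe of size at most $|\Sigma|$ together with every function $\mu \colon \Sigma \to B$; there are $2^{\mathcal{O}(n^2 \log n)}$ such pairs, and for each it first uses \Cref{partial-expB1} to decide in time $2^{\mathcal{O}(|B|)} = 2^{\mathcal{O}(n)}$ whether $\mathbf{B}$ is a partial residuated Boolean algebra, then checks in polynomial time that $\mu$ satisfies the clauses of \Cref{partial-val}, validates all of $\Phi$, and has $\mu(G) \not\le \mu(C)$; it answers ``$\Phi \vdash G \Rightarrow C$'' exactly when no pair passes both tests. Correctness is the claim, the total running time is $2^{\mathcal{O}(n^2 \log n)} \cdot 2^{\mathcal{O}(n)} = 2^{\mathcal{O}(n^2 \log n)}$, and so the finitary consequence relation of BFNL is in \EXPTIME. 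The only subtle point is the reduction: the partial valuation must lift to a homomorphic valuation without disturbing the premises, and it is precisely the closure conditions on $\Sigma$ (the constants, $\neg 1$, negation of subformulas) that make the inductive clauses for $\neg,\vee,\wedge$ go through, since a partial structure whose $\vee$ is undefined on some pair $(b,\neg b)$ fails to be a partial residuated Boolean algebra at all; the genuinely hard work — recognizing partial residuated Boolean algebras in single-exponential rather than doubly-exponential time — was already done in \Cref{partial-B1} and \Cref{partial-expB1}.
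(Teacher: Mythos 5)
Your proposal is correct and follows essentially the same route as the paper: a small-model property obtained by restricting a total countermodel to the (negation-closed) set of values of subformulas plus constants, the converse via embedding the partial algebra back into a total one, and then exhaustive enumeration of partial structures and valuations checked with \Cref{partial-expB1}. The only cosmetic differences are that you make the appeal to soundness/completeness with respect to residuated Boolean algebras explicit and you take the maximal partial substructure on the carrier rather than defining operations only where subformulas force them; both yield the same single-exponential bound.
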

\begin{proof}
\begin{enumerate}[(1)]
\item Let \(\mathcal{K}\) be the class of residuated Boolean algebras, \(\Phi = \{G_1 \Rightarrow C_1, G_2 \Rightarrow C_2, \dots, G_k \Rightarrow C_k\}\) be a set of sequents and \(G \Rightarrow C\) a sequent. Let: 
\[n := 2(s(G_1 \Rightarrow C_1) + s(G_2 \Rightarrow C_2) + \dots + s(G_k \Rightarrow C_k) + s(G \Rightarrow C)) + 4.\]
We show that \(\Phi\) entails \(G \Rightarrow C\), if, and only if, for all \(\mathbf{A} \in \mathcal{K}^P\) such that \(|A| \leq n\) and all valuations \(\mu\), if all sequents from \(\Phi\) are satisfied in \(\mathbf{A}\) under the valuation \(\mu\) and both \(\mu(G)\) and \(\mu(C)\) are defined, then \(G \Rightarrow C\) is satisfied in \(\mathbf{A}\) under the valuation \(\mu\).

\item[(1.1)] Let \(\mathbf{A} \in \mathcal{K}^P\), \(|A| \leq n\) and \(\mu\) be a valuation. Assume all sequents from \(\Phi\) are satisfied in \(\mathbf{A}\) under the valuation \(\mu\) and both \(\mu(G)\) and \(\mu(C)\) are defined, but \(G \Rightarrow C\) is not satisfied, i.e. \(\mu(G) \not\leq \mu(C)\). Then, for some \(\mathbf{A}' \in \mathcal{K}\), we have an embedding \(\iota\) of \(\mathbf{A}\) into \(\mathbf{A}'\). Then, \(\iota(\mu(G_i)) \oleq \iota(\mu(C_i))\) for all \(i = 1, \dots, k\) and \(\iota(\mu(G)) \onleq \iota(\mu(C))\) in \(\mathbf{A}'\). Hence, for the valuation \(\mu' = \iota \circ \mu\) all sequents from \(\Phi\) are satisfied, but \(G \Rightarrow C\) is not satisfied in \(\mathbf{A'}\). Thus, \(\Phi\) does not entail \(G \Rightarrow C\).

\item[(1.2)]  Now let \(G \Rightarrow C\) not be satisfied in \(\mathbf{A}' \in \mathcal{K}\) under the valuation \(\mu'\), but all sequents from \(\Phi\) be satisfied under \(\mu'\). We construct \(\mathbf{A} \in \mathcal{K}^P\).

First, we define \(T\) as the set consisting of \(1, \top, \bot\) and all subformulas of \(G_1, C_1, \dots, G_k, C_k, G, C\). We put \(A = \{ \mu'(D) : D \in T \} \cup \{ \oneg\mu'(D) : D \in T\}\). In effect, negation is a total operation, but doing this does not change final complexity. We define partial operations as follows:
\begin{itemize}
\item if \(D \in T\) and \(D = E \otimes F\), then \(\mu'(E) \otimes \mu'(F) := \mu'(E \otimes F)\);
\item if \(D \in T\) and \(D = E \multimap F\), then \(\mu'(E) \multimap \mu'(F) := \mu'(E \multimap F)\);
\item if \(D \in T\) and \(D = E \multimapinv F\), then \(\mu'(E) \multimapinv \mu'(F) := \mu'(E \multimapinv F)\);
\item if \(D \in T\) and \(D = E \vee F\), then \(\mu'(E) \vee \mu'(F) := \mu'(E \vee F)\);
\item if \(D \in T\) and \(D = E \wedge F\), then \(\mu'(E) \wedge \mu'(F) := \mu'(E \wedge F)\);
\end{itemize}

We define \(1 \otimes a := a\) and \(a \otimes 1 := a\) and \(\neg a := \oneg a\) and \(a \vee \neg a := \top\) and \(a \wedge \neg a := \bot\) for all \(a \in A\).

We also define \(\leq = {\oleq} \cap A^2\). By the construction, \(|A| \leq n\) and \(\mathbf{A} \in \mathcal{K}^P\). We define \(\mu = \mu'_{|T}\). Clearly, \(\mu\) satisfies the conditions of \Cref{partial-val} and \(\mu(G_i) \leq \mu(C_i)\) for \(i = 1, \dots, k\) and \(\mu(G) \not\leq \mu(C)\) and both \(\mu(G)\) and \(\mu(C)\) are defined.

\item Thus, to verify whether \(\Phi \vdash G \Rightarrow C\) we check whether \(G \Rightarrow C\) is satisfied in all \(\mathbf{A} \in \mathcal{K}^P\) under every valuation \(\mu\) such that \(|A| \leq n\) and all sequents from \(\Phi\) are satisfied in \(\mathbf{A}\) under \(\mu\) and both \(\mu(G)\) and \(\mu(C)\) are defined.

We construct all partial residuated Boolean algebras with cardinality not exceeding \(n\). Each such a structure can be encoded by matrices. Every binary operation and order is encoded by a matrix of size \(\mathcal{O}(n^2)\) and negation is encoded by matrix of size \(\mathcal{O}(n)\). Each entry in the matrix can take \(\mathcal{O}(n)\) values (including \(\infty\)). Hence, we have \(\mathcal{O}(2^{Ln^3})\) possibilities, where \(L\) is a positive integer. We check whether such a structure is a partial residuated Boolean algebra, using \Cref{partial-expB1}. This step can be done in \(\mathcal{O}(2^{Ln^3}2^{5n})\).

For a given residuated Boolean algebra \(\mathbf{A}\) the number of all possible valuations is \(\mathcal{O}(|A|^n)\). Checking if all sequents from \(\Phi\) and \(G \Rightarrow C\) are satisfied under the arbitrary valuation is \(\mathcal{O}(n)\). Hence, checking whether \(\Phi\) entails \(G \Rightarrow C\) in \(\mathbf{A}\) is \(\mathcal{O}(2^{n^3})\).

The time of the whole algorithm is \(\mathcal{O}(2^{Ln^3}2^{5n}2^{n^3}) = \mathcal{O}(2^{(L+1)n^3+5n})\). 
\end{enumerate}
\end{proof}

The analogous result for BFL (associative version of BFNL) does not hold. BFL is a strongly conservative extension of L and the consequence relation of L is undecidable \cite{buszkoL}.

If we exclude the constant 1 from BFNL, the result remains true. Moreover, for 1-free BFNL the lower bound of complexity of the consequence relation is also EXPTIME, since 1-free BFNL is a strongly conservative extension of 1-free DFNL which is EXPTIME-complete \cite{shkatov}. The lower bound of complexity for BFNL or DFNL with 1 remains an open problem.

\nocite{kozak}
\bibliographystyle{eptcs}
\bibliography{example}

\begin{thebibliography}{10}
\providecommand{\bibitemdeclare}[2]{}
\providecommand{\surnamestart}{}
\providecommand{\surnameend}{}
\providecommand{\urlprefix}{Available at }
\providecommand{\url}[1]{\texttt{#1}}
\providecommand{\href}[2]{\texttt{#2}}
\providecommand{\urlalt}[2]{\href{#1}{#2}}
\providecommand{\doi}[1]{doi:\urlalt{https://doi.org/#1}{#1}}
\providecommand{\eprint}[1]{arXiv:\urlalt{https://arxiv.org/abs/#1}{#1}}
\providecommand{\bibinfo}[2]{#2}

\bibitemdeclare{inproceedings}{buszkoL}
\bibitem{buszkoL}
\bibinfo{author}{Wojciech \surnamestart Buszkowski\surnameend}
  (\bibinfo{year}{2005}): \emph{\bibinfo{title}{Lambek calculus with nonlogical
  axioms}}.
\newblock In \bibinfo{editor}{Claudia \surnamestart Casadio\surnameend},
  \bibinfo{editor}{Philip~J. \surnamestart Scott\surnameend} \&
  \bibinfo{editor}{Robert A.~G. \surnamestart Seely\surnameend}, editors:
  {\slshape \bibinfo{booktitle}{Language and Grammar. Studies in Mathematical
  Linguistics and Natural Language}}, pp. \bibinfo{pages}{77--93}.

\bibitemdeclare{inproceedings}{buszko2021}
\bibitem{buszko2021}
\bibinfo{author}{Wojciech \surnamestart Buszkowski\surnameend}
  (\bibinfo{year}{2021}): \emph{\bibinfo{title}{Lambek Calculus with Classical
  Logic}}.
\newblock In \bibinfo{editor}{Roussanka \surnamestart Loukanova\surnameend},
  editor: {\slshape \bibinfo{booktitle}{Natural Language Processing in
  Artificial Intelligence---NLPinAI 2020}}, \bibinfo{publisher}{Springer
  International Publishing}, pp. \bibinfo{pages}{1--36},
  \doi{10.1007/978-3-030-63787-3_1}.

\bibitemdeclare{article}{chvalovsky}
\bibitem{chvalovsky}
\bibinfo{author}{Karel \surnamestart Chvalovsk{\`y}\surnameend}
  (\bibinfo{year}{2015}): \emph{\bibinfo{title}{Undecidability of consequence
  relation in full non-associative {L}ambek calculus}}.
\newblock {\slshape \bibinfo{journal}{The Journal of Symbolic Logic}}
  \bibinfo{volume}{80}(\bibinfo{number}{2}), p. \bibinfo{pages}{567–586},
  \doi{10.1017/jsl.2014.39}.

\bibitemdeclare{article}{galatos2017}
\bibitem{galatos2017}
\bibinfo{author}{Nikolaos \surnamestart Galatos\surnameend} \&
  \bibinfo{author}{Peter \surnamestart Jipsen\surnameend}
  (\bibinfo{year}{2017}): \emph{\bibinfo{title}{Distributive residuated frames
  and generalized bunched implication algebras}}.
\newblock {\slshape \bibinfo{journal}{Algebra universalis}}
  \bibinfo{volume}{78}(\bibinfo{number}{3}), pp. \bibinfo{pages}{303--336},
  \doi{10.1007/s00012-017-0456-x}.

\bibitemdeclare{article}{kozak}
\bibitem{kozak}
\bibinfo{author}{Micha{\l} \surnamestart Kozak\surnameend}
  (\bibinfo{year}{2009}): \emph{\bibinfo{title}{Distributive Full {L}ambek
  Calculus Has the Finite Model Property}}.
\newblock {\slshape \bibinfo{journal}{Studia Logica}}
  \bibinfo{volume}{91}(\bibinfo{number}{2}), pp. \bibinfo{pages}{201--216},
  \doi{10.1007/s11225-009-9172-7}.

\bibitemdeclare{article}{lambekL}
\bibitem{lambekL}
\bibinfo{author}{Joachim \surnamestart Lambek\surnameend}
  (\bibinfo{year}{1958}): \emph{\bibinfo{title}{The mathematics of sentence
  structure}}.
\newblock {\slshape \bibinfo{journal}{The American Mathematical Monthly}}
  \bibinfo{volume}{65}(\bibinfo{number}{3}), pp. \bibinfo{pages}{154--170}.

\bibitemdeclare{inproceedings}{lambekCUT}
\bibitem{lambekCUT}
\bibinfo{author}{Joachim \surnamestart Lambek\surnameend}
  (\bibinfo{year}{1961}): \emph{\bibinfo{title}{On the calculus of syntactic
  types}}.
\newblock In \bibinfo{editor}{Roman \surnamestart Jakobson\surnameend}, editor:
  {\slshape \bibinfo{booktitle}{Structure of Language and Its Mathematical
  Aspects}}, \bibinfo{volume}{12}, \bibinfo{publisher}{Providence, RI: American
  Mathematical Society}, pp. \bibinfo{pages}{166--178}.

\bibitemdeclare{article}{pentus}
\bibitem{pentus}
\bibinfo{author}{Mati \surnamestart Pentus\surnameend} (\bibinfo{year}{2006}):
  \emph{\bibinfo{title}{{L}ambek calculus is {NP}-complete}}.
\newblock {\slshape \bibinfo{journal}{Theoretical Computer Science}}
  \bibinfo{volume}{357}(\bibinfo{number}{1}), pp. \bibinfo{pages}{186--201},
  \doi{https://doi.org/10.1016/j.tcs.2006.03.018}.

\bibitemdeclare{article}{shkatov}
\bibitem{shkatov}
\bibinfo{author}{Dmitry \surnamestart Shkatov\surnameend} \&
  \bibinfo{author}{Clint~J. \surnamestart Van~Alten\surnameend}
  (\bibinfo{year}{2019}): \emph{\bibinfo{title}{Complexity of the universal
  theory of bounded residuated distributive lattice-ordered groupoids.}}
\newblock {\slshape \bibinfo{journal}{Algebra Universalis}}
  \bibinfo{volume}{80}(\bibinfo{number}{3}), \doi{10.1007/s00012-019-0609-1}.

\bibitemdeclare{article}{vanalten}
\bibitem{vanalten}
\bibinfo{author}{Clint~J. \surnamestart {van Alten}\surnameend}
  (\bibinfo{year}{2013}): \emph{\bibinfo{title}{Partial algebras and complexity
  of satisfiability and universal theory for distributive lattices, boolean
  algebras and Heyting algebras}}.
\newblock {\slshape \bibinfo{journal}{Theoretical Computer Science}}
  \bibinfo{volume}{501}, pp. \bibinfo{pages}{82--92},
  \doi{10.1016/j.tcs.2013.05.012}.

\end{thebibliography}
\end{document}